\newcommand{\R}{\mathbb R}
\newcommand{\pr}{\operatorname{Pr}}
\newcommand{\E}{\mathbb{E}}
\newtheorem{theorem}{Theorem}
\newtheorem{definition}{Definition}
\newtheorem{lemma}{Lemma}
\newtheorem{proposition}{Proposition}[section]
\newcommand{\be}{\begin{equation}}
\newcommand{\ee}{\end{equation}}
\begin{document}

\title{Not Always Sparse: Flooding Time in Partially Connected Mobile Ad Hoc Networks}  

\author{Lorenzo Maggi and Francesco De Pellegrini\\
CREATE-NET \\
via alla Cascata, 56/D, 38123 Trento (Italy)\\
E-mail: \{lmaggi,fdepellegrini\}@create-net.org}
\maketitle

\begin{abstract}

In this paper we study mobile ad hoc wireless networks using the notion of evolving connectivity graphs. In such systems, the connectivity changes over time due to the intermittent contacts of mobile terminals. In particular, we are interested in studying the expected flooding time when full connectivity cannot be ensured at each point in time. Even in this case, due to finite contact times durations, connected components may appear in the connectivity graph. Hence, this represents the intermediate case between extreme cases of fully mobile ad hoc networks and fully static ad hoc networks. By using a generalization of edge-Markovian graphs, we extend the existing models based on sparse scenarios to this intermediate case and calculate the expected flooding time. We also propose bounds that have reduced computational complexity. Finally, numerical results validate our models.


\end{abstract}



\section{Introduction}\label{sec:Intro}


Research in ad hoc networks has been focusing on several trade-offs emerging in such systems, and several models able to reproduce their properties have been proposed in literature. In particular, two milestone results in this area captured the properties of fully {\em static} ad hoc networks on the one hand \cite{gupta_capacity}, and the case of fully {\em mobile} ad hoc networks on the other hand \cite{tse_mobility02}. From the constructive proof in~\cite{tse_mobility02} we learn that there exists a class of routing protocols for mobile ad hoc networks which are scalable in the number of nodes, under the hypothesis of \textit{i.i.d.} mobility\footnote{The proof in~\cite{tse_mobility02} was developed for the two hop routing protocol where relays can only receive messages from a source and deliver to the destination.}. 
This fact motivated a large effort in the study of the so called {\em store-carry-and-forward} routing protocols. Such protocols are employed for a specific class of mobile ad hoc networks, namely mobile Delay Tolerant Networks (DTNs). DTNs are networks where connectivity is intermittent and the duration of end-to-end paths is not sufficient to deliver a message from source to destination using traditional routing techniques. Instead, store-carry-and-forward is  used in such systems, meaning that message copies stored at local nodes' memory can be delivered to peer mobiles whenever they enter radio range~\cite{ZNKT}\cite{CBR}\cite{groenevelt2005message}\cite{spy_ton08}. To this respect, mobility compensates for the lack of connectivity: for instance, the so called epidemic routing performs flooding by releasing a copy of the message to any node without the message in radio range. 

Furthermore, in the highly-dynamic scenario depicted above, a customary assumption is that nodes cannot acquire knowledge of the network topology. As a consequence, the process of data exchange in mobile DTNs is typically represented by means of a point process where events are contact times, i.e., the time instants when mobile nodes are in radio range and can exchange message copies \cite{spy_ton08,ABD}. 

However, the effect of mobility is not always such to fully disconnect the network at each point in time. Indeed, full \textit{i.i.d.} mobility captures well the extreme case of networks where the delay for routing a message is dominated by the effect of intermeeting times, i.e., times between successive meeting of two mobile nodes.\footnote{Our definition of intermeeting times refers to the residual time from the end of a contact until a novel contact occurs.} In practice, despite connectivity is evolving and intermittent due to mobility, when contact times have finite duration then connected components may exist.  Thus, within those ``connectivity islands'', the delay for message diffusion needs not to be dominated by the duration of intermeeting times. 
 
In this work, we are interested in the performance figures of the {\em flooding time} in mobile ad hoc  networks in such intermediate regime. 
The flooding time is defined as the first time instant in which all nodes are informed, given that a message is generated by one source node at time $t=0$.

More precisely, we derive performance figures based on a specific model for evolving graphs, which is a generalization of the class of 
edge-Markovian graphs~\cite{ClementiPODC2008}. %
The interest of this topic relates to the fundamental trade-off between mobility and connectivity~\cite{WangTON2013}. In fact, in the case of a static network the classical key concept is that of a connectivity graph. In the simplest case, one may resort to the protocol model proposed in \cite{gupta_capacity}. In turn, it is well known that the flooding time in static networks is dictated by the diameter of the connectivity graph.

However, in a mobile network, the connectivity graph is an {\em evolving} one~\cite{DePe2007}\cite{ferreira_wiopt04}, i.e., links may appear and disappear repeatedly. Hence, the notion of diameter of a graph does not apply to this context.  Rather, as suggested in~\cite{ClementiPODC2008}, the flooding time appears to be the right metric to encode connectivity properties of such networks. In the rest of the paper, we will characterize this metric as a function of the number of nodes of the network and of the mobility parameters of terminals.

The paper is organized as follows. Sect.~\ref{sec:related} describes relevant background on evolving graphs and remarks the novelties introduced in our framework, compared to existing works. In Sect.~\ref{sec:rgraphmodel} we describe our network model, whereas in Sect.~\ref{sec:Poisspoint} we provide an easy expression for the flooding time in the limit case when intermeeting times are much larger than contact durations. In Sect.~\ref{sec:FloodTmain} we develop our main result in the general case. Also, in  Sect.˜\ref{sec:lowboundF} we introduce approximations for the flooding time which require lower computational complexity. Finally, we deliver numerical results in Sect.˜\ref{sec:numres} and a concluding section ends the paper. The main notation used in the rest of the paper is reported in Tab.~\ref{tab:sym}.

\begin{table}[t]\caption{List of symbols used throughout the paper}\label{tab:sym}
\small 
\centering
{\begin{tabular}{|c| p{.8\textwidth}|}
\hline
$\lambda^{-1}$ & : expected contact duration \\ 
$\mu^{-1}$ & : expected intermeeting time \\ 
$p$ & : probability that an edge is in state ON at time 0\\ 
$N$ & : number of nodes \\ 
$F(1,N\!-\!1)$ &  : (also $F(1)$) exact flooding time; expected time for a message generated by 1 node to reach the remaining $N\!-\!1$ nodes\\ 
$F_0(1,N\!-\!1)$ & : (also $F_0(1)$) flooding time in the case of infinitesimal contact duration ($p=0,\mu\uparrow \infty$) \\
$\overline{F}(1,N\!-\!1)$ &  : (also $\overline{F}(1)$) upper bound for $F(1,N\!-\!1)$ \\ 
$\underline{F}(1,N\!-\!1)$ & : (also $\underline{F}(1)$) lower bound for $F(1,N\!-\!1)$ \\
$S_i^{(a)}(N)$ & : event that $i$ nodes are informed, but only $a$ of them are possibly active, i.e., connected with other nodes \\
$W_i^{(c)}(N)$ & : event that $i$ informed nodes are connected to other $c$ nodes \\
$F^{(a)}(i,N-i)$ & : flooding time when $i$ nodes are already informed, but only $a$ of them are possibly active \\
$\mathcal{C}_N,\overline{\mathcal{C}}_N,\underline{\mathcal{C}}_N$ & : computational complexity for $F(1,N\!-\!1)$, $\overline{F}(1,N\!-\!1)$, and $\underline{F}(1,N\!-\!1)$ respectively \\
$G_t$ & : connectivity graph at time $t$\\
\hline
\end{tabular}}
\end{table}


\section{Related works}\label{sec:related}


Models for communications based on evolving graphs have been proposed in literature since the seminal work \cite{ferreira_network}. In \cite{AvinFC}, the authors provide an asymptotic analysis of the cover time of  a graph where edges can be modified at each time step by an adversarial. Covering is shown to be feasible in a time which is polynomial in the size of the graph, despite examples when such time is exponential exist for standard random walkers. In this paper we are interested in studying the flooding time: a similar analysis to the one provided here appears in \cite{ClementiTON2013}. The authors there provide upper and lower bounds for the flooding time in edge-Markovian dynamic graphs, under the assumption that the stationary node distribution at every time step is almost uniform and the transmission radius $r$ is over the connectivity threshold. The model proposed uses discretization in both time and space. In \cite{ClementiPODC2008} the authors prove that the flooding time is $O\Big (\frac{\log N}{\log(1+Np)} \Big )$ and $\Omega\Big (\frac{\log N}{ Np} \Big )$: those results are in line with our results for the continuous-time case. 
 
Flooding time is also a well investigated subject in graph theory. Authors of~\cite{VanMieghemFTRG} extend classical percolation results for the diameter of random graphs \cite{Bollobas}. The relevant result is that the flooding time distribution can be expressed in closed form for the class of Erd\"os Renyi random graphs where link delays are exponentially distributed. Along the same research line, the study in \cite{Armini2013} derives analogous results for regular graphs. Results of \cite{VanMieghemFTRG} resemble the form that we obtain here. However, in our case, the link presence is governed by an on-off process and the topology is not static. In the context of network capacity analysis, indeed, the beneficial effect of mobility has been analysed in the original work \cite{tse_mobility02} and in several follow-ups, including the case with a population of heterogeneous devices characterized by different intermeeting intensities \cite{Garetto2009}. 

The beneficial effect of mobility onto the capacity of ad hoc networks appeared in \cite{tse_mobility02} under the assumption of \textit{i.i.d.} mobility occurs in the unitary disc. Capacity results were extended  to the case of clustered networks in \cite{WangTON2013}: indeed, the presence of a clustered hierarchical structure, is showed to lower the minimum asymptotic degree required to maintain asymptotic connected graphs. The reduction per node obeys to a factor $N^{1-d}$, where $d$ is the cluster size.

Furthermore, several papers extended the analysis to the case of DTNs, where all nodes are disconnected with high probability. In this context, one may model connectivity using evolving graphs and derive delay bounds~\cite{DePe2007}. The authors  of~\cite{Whitbeck2011} identify edge-Markovian dynamic graphs as a promising model to capture the main features of store-carry-and-forward routing protocols.\\

{\noindent \em Novel contributions:} Customary models used in ad hoc networks consider either a fixed topology, i.e., the case of a static ad hoc network, or a fully disconnected network, i.e., the DTN case. The analysis proposed in this paper is able to explore the intermediate case when connectivity is still intermittent as assumed in DTNs, but, whenever a connected component is present, messages can span several hops with very small delay.  Moreover, if compared to the analysis in \cite{ClementiPODC2008}, the continuous time model we employ has the key advantage that the closed forms and bounds proposed in this paper can be specialized to span different conditions of the sub-critical regime, depending on the ratio between the contact time duration and the intermeeting times.


\section{Network model} \label{sec:rgraphmodel}


We consider a delay tolerant network (DTN) with a set $\mathcal{N}$ of $N$ mobile nodes. Any pair of nodes can exchange a message when in radio range, i.e., whenever their distance is smaller than a certain threshold $r>0$. Since terminals roam within a region of size larger than $r$, the connectivity between any two terminals is intermittent.\\
We model this situation by a connectivity graph where nodes represent the terminals and edges evolve over time. At each time $t\ge 0$, the undirected edge between any two nodes can be either in state ON, i.e., the two terminals can communicate with each other, or in state OFF, i.e., they are off-range. The ON--OFF process governing the presence of edge $i$ is stochastic and follows an alternating renewal process. Once an edge is ON, it remains active for a random time $M_1^{(i)}$; then it switches to the OFF state, where it remains for a time $L_1^{(i)}$. It then goes ON for a time $M_2^{(i)}$, and so on and so forth. We assume that the random variables $A^{(i)}=\{M_j^{(i)},L_j^{(i)}\}_{j\ge 1}$ are \textit{i.i.d.}. The intermeeting time $L_j^{(i)}$, i.e., the time between the end of a contact and the next one, is an exponential random variable with mean $\lambda^{-1}>0$. Conversely, we do \emph{not} make further assumptions on the duration of a contact $M_j^{(i)}$: it is a generic random variable with mean $\mu^{-1}\ge 0$.\\
We further suppose that the states of the edges evolve in an independent fashion and according to the same distribution, i.e. $\{A^{(i)}\}_{i=1}^N$ are \textit{i.i.d.}. This assumption allows our problem of computing the flooding time to be analytically tractable.\\
We note that no assumption on the independence between $L_j^{(i)}$ and $M_j^{(i)}$ is required. Hence, our model is still valid if the duration of the contact between any two terminals somehow influences the following intermeeting time.

Let $P(t)$ be the probability that an edge is ON at time $t$. From classic results in renewal theory  (e.g. \cite{ross1996stochastic}, Thm 3.4.4),
\be \label{eq:pdef}
\lim_{t\rightarrow\infty} P(t) = \ \frac{E[M^{(i)}_j]}{E[M^{(i)}_j]+E[L^{(i)}_j]} = \ \frac{\mu^{-1}}{\mu^{-1}+\lambda^{-1}} := \, p.
\ee
We point out that our graph model is a generalization of a continuous time edge-Markovian graph \cite{ClementiPODC2008}, in which the contact duration $M^{(i)}_j$ between any two terminals is also an exponential random variable, independent of $L^{(i)}_j$. In that case, the edge process follows a continuous-time Markov chain with transition rate matrix $Q=\Big (\begin{smallmatrix}
-\lambda & \lambda \\
\mu & -\mu
\end{smallmatrix} \Big )$.

In this paper we are interested in studying the {\em expected flooding time} $F(1,N\!-\!1):=\E[f(1,N\!-\!1)]$, defined as the time employed by a message generated by a tagged node to reach all the remaining $N-1$ nodes. The source copies the message to all the nodes it encounters over time. In turn, all the nodes carrying the message copy it to all the nodes that happen to fall within their communication range. We say that a node is \emph{informed} at time $t$ if, at time $t$, it has a copy of the message. The described relay protocol is called \emph{unrestricted multi-copy protocol} in \cite{groenevelt2005message} or \emph{epidemic routing} \cite{ZNKT}.

In order to develop our analysis, we consider the {\em intermeeting dominated} case, i.e., the transmission time is negligible on the scale of the ON-OFF process governing the link presence: hence, any node \emph{instantaneously} copies the carried message to all the nodes it is connected to.

More specifically, we provide the exact expression of $F(1,N\!-\!1)$. Moreover, so as to facilitate the online estimation of the flooding time, we provide two different upper bounds and one lower bound for $F(1,N\!-\!1)$, which can be computed with lower complexity.

In order to compute the flooding time $F(1,N\!-\!1)$, we assume that the system is observed in steady state\footnote{This is justified if we assume that the network, at time 0, has not been observed for a time long enough (see Eq. \ref{eq:pdef})} at time 0. Thus, if we let $E_t$ be the set of undirected edges in state ON at time $t\ge 0$ and $G_t$ the corresponding connectivity graph, then $G_0\!=\!(E_0,N)$ is an Erd\"os-R\'enyi graph with parameter $p$. 

A notation remark: for compactness' sake, we will drop the dependence of our variables on $N$ whenever possible. For instance, the flooding time $F(1,N\!-\!1)$  will be $F(1)$, and same simplification holds for its lower and upper bounds $\underline{F}$ and $\overline{F}$, respectively. 


\section{Sparse regime: flooding time with point-like contacts} \label{sec:Poisspoint}


Before tackling the computation of the flooding time in the general case with $\lambda^{-1}>0$ and $\mu^{-1}\ge 0$, it is useful to develop a simpler and more restrictive case in which the intermeeting process between any two terminals is a point process, i.e., the contacts' average duration is null (or $\mu\uparrow \infty$)\footnote{We will precise later in Thm.~\ref{thm:scaling} the notion of the sparse regime case as the limit for the general case.}.

In the general model introduced later, more than one link may be active at the same time with positive probability: therefore, in the intermeeting dominated case, the message originated by the source node can be spread to more than one node simultaneously.

Conversely, in the point-like contact case developed in this section, the event that two or more links are in state ON at the same time occurs with null probability. In other words, the set of edges $E_t$ is almost always empty for $t\ge 0$. Moreover, if we suppose that $E_{t}$ contains one edge for some $t\ge 0$, then $E_{t}$ contains another edge with null probability. It is then apparent that in this section we study the {\em sparse regime} of DTNs: a message created by the source can only be transmitted via separate successive hops between pair of terminals.\\
Let us call $F_0(1,N\!-\!1)$ the flooding time in the point-like case. Whenever its dependence on $N$ is clear, we will call it $F_0(1)$. Clearly, $F_0(1)$ constitutes an upper bound for the flooding in the general case for finite $\mu$, i.e.,  
\[
F_0(1,N-1)\ge F(1,N-1), \quad \forall\, \lambda, \, p, \, N.
\]
In fact, the probability of having connected components in an Erd\"os-R\'enyi graph increases with the probability $p$ that an edge is in state ON. In Sect. \ref{sec:2ndupperFlood} we provide a second upper bound for $F(1)$, which is tighter than $F_0(1)$ for large $N$.

To proceed further, let $F_0(i)$ the flooding time with point-like contacts under the hypothesis that $i$ nodes are informed, i.e., they have the message and can forward it to other nodes. Therefore, we can write
\begin{align*}
F_0(N) = & \, 0, \quad F_0(N-1) = \, \frac{1}{\lambda(N-1)}  \\
F_0(i) = & \, \frac{1}{\lambda i(N-i)} + F_0(i+1), \quad 1\le i\le N-2.
\end{align*}
Using the recursive expression above, it follows that
\begin{align*}
F_0(1) = & \, \sum_{i=1}^{N-1} \frac{1}{\lambda i(N-i)} = \, \frac{2}{\lambda N} \sum_{i=1}^{N-1} \frac{1}{i} = \,  \frac{2}{\lambda N} \,H_{N-1},
\end{align*}
where $H_n$ is the $n$-th harmonic number. By the classic relation
\[
\int_1^N \frac{1}{x} \,dx \le H_n \le 1 + \int_2^N \frac{1}{x-1} \,dx,
\]
we easily conclude: $2 \,\frac{\ln N}{\lambda N} \le F_0(1) \le 2\, \frac{1+\ln(N-1)}{\lambda N}$.

Therefore, $F_0(1,N\!-\!1)\in \Theta(N^{-1}\ln N)$ and, since $F_0(1)$ is an upper bound for the flooding time $F(1)$ when $\mu^{-1}>0$, the asymptotic bound for $F(1)$ writes
\[
F(1,N\!-\!1)\in \, O(N^{-1}\ln N).
\]


\section{Flooding time} \label{sec:FloodTmain}


After studying in Sect. \ref{sec:Poisspoint} the flooding time $F_0(1)$ in the specific case of point-like contact ($\mu^{-1}\downarrow 0$), we now investigate the flooding time $F(1)$ in the more general non-sparse case with contact average duration $\mu^{-1}\ge 0$. Hence, $F_0(1)$ equals $F(1)$ calculated at $p=0$.

We recall here two technical assumptions made in Sect. \ref{sec:rgraphmodel}, namely instantaneous transmission during contacts and exponential intermeeting time with mean $\lambda^{-1}$.

Further modelling is required with respect to the point-like contact case of Sect.~\ref{sec:Poisspoint}. In fact, at any time $t$, in the connectivity graph $G_t$ there exist strongly connected components with more than one edge with positive probability. Hence, if at least one node $n$ belonging to a connected component is informed, then the message can be spread instantaneously to all the nodes connected with $n$. The model introduced in this paper is meant precisely to capture such topological feature characterizing mobile ad hoc networks, in order to go beyond the sparse regime.\\
In particular, in this case, we need to distinguish among informed nodes. At each time $t$, in fact, there are two classes of nodes: {\em active} and {\em not active}.  A node is {\em active} if it \emph{possibly} has some link in state ON (each with probability $p$): we hence define the auxiliary notation $F^{(a)}(i)$, which represents the expected flooding time when $i$ nodes are informed and $0\leq a\leq i$ of them are active.

Moreover, 
due to the symmetry of the problem, $F(1)$ does not depend on the identity of the source. 
The exact expression of the flooding time $F(1)$ is presented in iterative form in the following Theorem, whose proof is in the Appendix.

\begin{theorem} \label{theo:floodTexact}
The flooding time $F(1)$ can be expressed as
\begin{align}
F(1) = \, & (1-p)^{N-1}\left( \frac{1}{\lambda(N-1)} + F^{(1)}(2) \right) + \notag \\
& \sum_{c=1}^{N-2}  {N-1\choose c} p^c(1-p)^{N-1-c} \,F^{(c)}(1+c), \label{eq:F1}
\end{align}
where, for $1\le i \le N-2, \ 1\le a\le i-1$,
\begin{small}\begin{align}
F^{(a)}(i) = & \, (1-p)^{a(N-i)}\left( \frac{1}{\lambda i(N-i)} + F^{(1)}(i+1)\right) + \notag\\
& +\sum_{c=1}^{N-i-1} {N-i\choose c} \Big[1-(1-p)^a\Big]^c(1-p)^{a(N-i-c)} \,F^{(c)}(i+c) \label{eq:Fai}
\end{align}\end{small}
and, for $1\le a \le N-2$: $\displaystyle F^{(a)}(N-1) = (1-p)^a\frac{1}{\lambda(N-1)}$. \hfill $\Box$
\end{theorem}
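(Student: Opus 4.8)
The plan is to recognize the flooding dynamics as a Markov process on the state $(i,a)$, where $i$ is the number of informed nodes and $a$ the number of \emph{active} ones, and to derive the recursion by a first-step analysis. First I would set up the two structural invariants that make the pair $(i,a)$ a sufficient statistic for the future: (i) every \emph{inactive} informed node has all of its edges towards the $N-i$ uninformed nodes in state OFF, and (ii) every \emph{active} informed node has its edges towards the uninformed set still ``fresh'', i.e., each is independently ON with probability $p$ and has never been inspected. I would prove that these invariants propagate from one step to the next via the principle of deferred decisions: only the active nodes' edges are ever revealed, uninformed--uninformed edges are left untouched, and a node remains uninformed at a step precisely when all of its edges to the current active set are OFF.

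Given the invariants, I would perform the first-step analysis at a generic state $(i,a)$ with $1\le i\le N-2$. Each of the $N-i$ uninformed nodes is connected to the active set if and only if at least one of its $a$ edges towards active nodes is ON, an event of probability $1-(1-p)^a$; since the edges incident to distinct uninformed nodes are disjoint and independent, the number $c$ of uninformed nodes connected to the active set is $\mathrm{Binomial}\bigl(N-i,\,1-(1-p)^a\bigr)$. If $c\ge 1$ these nodes are informed \emph{instantaneously}, the freshly informed nodes become the new active set, the old active nodes turn inactive (invariant (i) holds for them because the surviving uninformed nodes are exactly those having all OFF edges to the old active set), and the process jumps in zero time to state $(i+c,c)$, contributing $F^{(c)}(i+c)$; iterating these zero-time jumps reconstructs the spreading across an entire connected component. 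If instead $c=0$, no transmission is possible, and here I would invoke the memorylessness of the exponential OFF durations: each of the $i(N-i)$ informed--uninformed edges is OFF and turns ON at rate $\lambda$ with exponential residual time, so the first contact occurs after an $\mathrm{Exp}(\lambda i(N-i))$ delay and informs a single new node, leading to state $(i+1,1)$ and contributing $\frac{1}{\lambda i(N-i)}+F^{(1)}(i+1)$. Weighting these outcomes by the binomial probabilities yields exactly~\eqref{eq:Fai}, noting that the $c=N-i$ term is omitted since it corresponds to $F^{(N-i)}(N)=0$.

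I would then treat the boundaries. For $i=N-1$ there is a single uninformed node, which is either adjacent to the active set (probability $1-(1-p)^a$, informed instantly) or isolated from it (probability $(1-p)^a$), in which case all $N-1$ edges to it are OFF and the expected wait is $\frac{1}{\lambda(N-1)}$; this gives the base case $F^{(a)}(N-1)=(1-p)^a\frac{1}{\lambda(N-1)}$. Finally, for the initial condition I would observe that at $t=0$ the lone source is the only informed node and is itself active, so the state is $(1,1)$; applying the very same first-step decomposition with $i=a=1$ reproduces~\eqref{eq:F1}, establishing $F(1)=F^{(1)}(1)$ even though this value falls outside the index range $1\le a\le i-1$ of~\eqref{eq:Fai}.

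The main obstacle I anticipate is the rigorous justification of the two invariants together with the claim that, conditioned on the entire history, the state $(i,a)$ is a sufficient statistic for the future evolution. This requires combining the principle of deferred decisions for the Erd\"os--R\'enyi edge states at $t=0$ with the memorylessness of the exponential intermeeting times, so that unrevealed edges remain i.i.d.\ $\mathrm{Bernoulli}(p)$ and every OFF edge retains an $\mathrm{Exp}(\lambda)$ residual regardless of the elapsed time. Establishing this Markov property cleanly -- in particular checking that no edge is ever inspected twice and that inactive nodes never spontaneously reacquire fresh edges towards the shrinking uninformed set -- is the delicate part; once it is in place, the recursion follows by routine first-step analysis.
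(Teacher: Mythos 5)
Your proposal is correct and follows essentially the same route as the paper's proof: your states $(i,a)$ with invariants (i)--(ii) are exactly the paper's events $S_i^{(a)}$, your binomial count of newly connected nodes reproduces $\pr\bigl(W_i^{(c)}\mid S_i^{(a)}\bigr)={N-i\choose c}\bigl[1-(1-p)^a\bigr]^c(1-p)^{a(N-i-c)}$, your zero-time jump to $(i+c,c)$ is the paper's identity $\bigl(W_i^{(c)},S_i^{(a)}\bigr)=S_{i+c}^{(c)}$, and the $c=0$ exponential wait leading to $F^{(1)}(i+1)$, the boundary case $i=N-1$, and the initial condition $F(1)=F^{(1)}(1)$ all match. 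The only difference is that you spell out the deferred-decisions/stationarity/memorylessness justification of the Markov property, which the paper asserts implicitly; that is a welcome sharpening, not a different argument.
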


\subsection{Flooding time for $p$ small} \label{sec:floodTpsmall}

In this section we wish to find the relation between the flooding time with point-like contacts, $F_0(1)$, and the general flooding time $F(1)$ for small values of the stationary probability $p$. More specifically, we wish to find an expression of the kind $F(1)=F_0(1)-\chi p+o(p)$, where $\chi$ is some positive constant. We then obtain the following result.
\begin{lemma} \label{lem:floodTpsmall}
The flooding time $F(1,N\!-\!1)$ can be written for values of $p$ sufficiently close to 0 as
\[
F(1,N\!-\!1) = \, F_0(1) - \lambda^{-1} H_{N-1}\, p + o(p),
\]
where $H_n=\sum_{i=1}^n 1/i$ is the $n$-th harmonic number. \hfill $\Box$
\end{lemma}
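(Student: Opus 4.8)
The plan is to Taylor-expand the exact expression for $F(1)$ from Theorem~\ref{theo:floodTexact} about $p=0$ and read off the coefficient of $p$. First I would confirm the zeroth-order term. Setting $p=0$ in \eqref{eq:F1} and \eqref{eq:Fai} annihilates every summand carrying a factor $p^c$ or $[1-(1-p)^a]^c$ with $c\ge 1$ (since $1-(1-p)^a\to 0$), leaving the $a$-independent relation $F^{(a)}(i)\big|_{p=0}=\frac{1}{\lambda i(N-i)}+F^{(1)}(i+1)\big|_{p=0}$. This is exactly the point-like recursion of Sect.~\ref{sec:Poisspoint}, with boundary $F^{(a)}(N-1)\big|_{p=0}=\frac{1}{\lambda(N-1)}$, so $F^{(a)}(i)\big|_{p=0}=F_0(i)$ for every admissible $a$, and in particular $F(1)\big|_{p=0}=F_0(1)$. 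This pins down the constant term.

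Next I would compute the first derivative at $p=0$. The key structural observations are: (i) in both \eqref{eq:F1} and \eqref{eq:Fai} every summand with $c\ge 2$ is $O(p^2)$ and may be discarded; (ii) the only surviving first-order contributions come from the differentiated prefactors $(1-p)^{m}$, using $\frac{d}{dp}(1-p)^{m}\big|_{p=0}=-m$, together with the $c=1$ summands, where $1-(1-p)^a=ap+o(p)$; and (iii) the only sub-flooding-time whose derivative actually propagates is $F^{(1)}(i+1)$, which sits in the leading term with an order-one prefactor. Writing $\phi^{(a)}(i):=\frac{d}{dp}F^{(a)}(i)\big|_{p=0}$ and using the telescoping identity $F_0(i)-F_0(i+1)=\frac{1}{\lambda i(N-i)}$, the cross terms combine into the clean recursion
\[
\phi^{(a)}(i)=\phi^{(1)}(i+1)-\frac{a}{\lambda i},
\]
valid for $1\le i\le N-2$, so that only the $a=1$ branch feeds forward and the whole computation collapses to a one-dimensional recursion.

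To finish, I would differentiate the boundary relation $F^{(a)}(N-1)=(1-p)^a/(\lambda(N-1))$ to get $\phi^{(a)}(N-1)=-a/(\lambda(N-1))$, and then solve the $a=1$ recursion by telescoping:
\[
\phi^{(1)}(i)=-\frac{1}{\lambda}\sum_{k=i}^{N-1}\frac{1}{k}=-\lambda^{-1}\bigl(H_{N-1}-H_{i-1}\bigr).
\]
Differentiating \eqref{eq:F1} in the same spirit and using $F_0(1)-F_0(2)=1/(\lambda(N-1))$ yields $\phi(1)=\phi^{(1)}(2)-\lambda^{-1}=-\lambda^{-1}(H_{N-1}-1)-\lambda^{-1}=-\lambda^{-1}H_{N-1}$. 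Combining with the constant term gives $F(1)=F_0(1)-\lambda^{-1}H_{N-1}\,p+o(p)$, as claimed.

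I expect the main obstacle to be the first-order bookkeeping in the second paragraph: correctly isolating which terms contribute at order $p$. The subtle point is that although $F^{(a)}(i)\big|_{p=0}$ is independent of $a$, the $a$-dependence re-enters at first order through both the differentiated prefactors and the $c=1$ summands, yet—thanks to the structure of \eqref{eq:Fai}—only the $a=1$ derivative survives into the final answer. Managing these cancellations so that the sums telescope into harmonic numbers is the crux; once the recursion $\phi^{(a)}(i)=\phi^{(1)}(i+1)-a/(\lambda i)$ is established, the remaining steps are routine.
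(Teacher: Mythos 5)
Your proposal is correct and is essentially the paper's own argument: the paper performs the same first-order expansion in $p$ on the triangular system $\mathbf{T}\mathbf{F}=\mathbf{d}$, where your cancellation between the differentiated prefactor $(1-p)^{a(N-i)}$ and the $c=1$ summand appears as the matrix entry $T_{\Psi(i,a),\Psi(i+1,1)}=-1+o(p)$, the $c\ge 2$ entries are $o(p)$, and your telescoping of $\phi^{(1)}$ is the sum $F(1,N\!-\!1)=\sum_{i=1}^{N-1}d_{\Psi(i,1)}+o(p)$ with $d_{\Psi(i,1)}=\frac{1}{\lambda i(N-i)}-\frac{p}{\lambda i}+o(p)$. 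The only cosmetic difference is that you differentiate the recursions directly and extract the intermediate closed form $\phi^{(1)}(i)=-\lambda^{-1}\bigl(H_{N-1}-H_{i-1}\bigr)$, whereas the paper carries out the identical bookkeeping on the entries of $\mathbf{T}$ and $\mathbf{d}$.
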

The proof of Lemma \ref{lem:floodTpsmall} is deferred to the Appendix. We conclude that, for $p$ small, the flooding time decreases approximately linearly in $p$ with coefficient $\lambda^{-1} H_{N-1}$, which behaves like $\log(N)+\gamma$ for large values of $N$, where $\gamma$ is the Euler-Mascheroni constant. 


\subsection{Computational complexity} \label{sec:compcomplfloo}

After providing the exact expression of the flooding time $F(1)$ in Thm. \ref{theo:floodTexact}, in this section we focus on issues related to its computation. First we provide the matricial form of the linear system of equations in (\ref{eq:F1}) and (\ref{eq:Fai}), then we show the complexity of the computation of $F(1)$, in terms of number of required additions and multiplications.
From expressions (\ref{eq:F1}) and (\ref{eq:Fai}), we notice that $(N-2)(N-1)/2$ auxiliary variables need to be utilized in order to compute the flooding time $F(1)$, namely $F^{(a)}(i)$, for $i=2,\dots,N-1$ and $a=1,\dots,i-1$. Hence, the equations in (\ref{eq:F1}) and (\ref{eq:Fai}) can be rewritten as $\mathbf{T}\mathbf{F} = \mathbf{d}$, where $\mathbf{F}$ is the column vector:
\[
\mathbf{F} = \,\left[F^{(1:N-2)}(N-1),F^{(1:N-3)}(N-2),\dots,F^{(1)}(2),F(1)\right]^T
\]
and $F^{(a:a+k)}$ is defined as $[F^{(a)},F^{(a+1)},\dots,F^{(a+k)}]$. $\mathbf{T}$ is a square, lower triangular matrix of dimension $(N-2)(N-1)/2+1$. In order to compute the elements of $\mathbf{T}$ and $\mathbf{d}$, let us define the index mapping:
\begin{align*}
& \Psi(i,a) = \, \frac{(N-1)(N-2)-i(i-1)}{2}+a, \\
& \mathrm{with} \ 2\le i\le N-1, \ 1\le a\le i-1 \ \cup \ (i=1,a=1)
\end{align*}
Then,\\[-5mm]
\begin{align*}
T_{i,i} = & \, 1, \quad 1\le i\le (N-2)(N-1)/2+1\\
T_{\Psi(i,a),\Psi(i+1,1)} = & \, -(1-p)^{a(N-i)} + \\
 & \ \ -(N-i)\left[1-(1-p)^a\right](1-p)^{a(N-i-1)} \\
T_{\Psi(i,a),\Psi(i+c,c)} = & \, -{N-i\choose c} \Big[1-(1-p)^a\Big]^c(1-p)^{a(N-i-c)}, \\
& \quad 2\le c\le N-i-1
\end{align*}
and $\displaystyle d_{\Psi(i,a)} = \, (1-p)^{a(N-i)}\frac{1}{\lambda i(N-i)}$. For example, when $N=5$, the linear system has the following structure:
\[
\begin{small}
\begin{bmatrix}
1 & 0 & 0 & 0 & 0 & 0 & 0 \\
0 & 1 & 0 & 0 & 0 & 0 & 0 \\
0 & 0 & 1 & 0 & 0 & 0 & 0 \\
\bullet & 0 & 0 & 1 & 0 & 0 & 0 \\
\bullet & 0 & 0 & 0 & 1 & 0 & 0 \\
0 & \bullet & 0 & \bullet & 0 & 1 & 0 \\
0 & 0 & \bullet & 0 & \bullet & \bullet & 1 \\
\end{bmatrix} \,
\begin{bmatrix}
F^{(1)}(4) \\
F^{(2)}(4) \\
F^{(3)}(4) \\
F^{(1)}(3) \\
F^{(2)}(3) \\
F^{(1)}(2) \\
F(1)
\end{bmatrix} = \, \mathbf{d}
\end{small}
\]
where the symbol $\bullet$ stands for the generic nonnull element.

Now we are ready to compute the complexity of the flooding time: the proof requires the enumeration of the operations 
performed in order to solve the linear system $\mathbf{T}\mathbf{F} = \mathbf{d}$ and it is omitted for the sake of space.

\begin{proposition} \label{prop:complFexact}
The number $\mathcal{C}_N$ of operations (i.e.  multiplications or additions) required to compute $F(1,N\!-\!1)$ both equal the number of nonnull elements of the matrix $\mathbf{T}$ below the main diagonal, i.e. $\mathcal{C}_N = (N^3-6N^2+17N-18)/6$. \hfill $\Box$
\end{proposition}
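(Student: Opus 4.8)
The plan is to use the fact that $\mathbf{T}$ is unit lower triangular and solve $\mathbf{T}\mathbf{F}=\mathbf{d}$ by forward substitution, so that the operation count reduces to counting the nonnull subdiagonal entries of $\mathbf{T}$. Because $F(1)$ sits in the last row and depends (directly or through the auxiliary variables) on all the other unknowns, computing it requires resolving the entire triangular system, so the total cost is the sum of the per-row costs.

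First I would recall that, since every diagonal entry of $\mathbf{T}$ equals $1$, the unknowns are obtained top-to-bottom with no divisions: for a generic row index $r$,
\[
F_r = d_r - \sum_{s<r,\ T_{rs}\ne 0} T_{rs}\,F_s .
\]
If row $r$ carries $k_r$ nonnull entries strictly below the diagonal, evaluating this expression costs exactly $k_r$ multiplications (one per product $T_{rs}F_s$) and exactly $k_r$ additions ($k_r-1$ to accumulate the products plus one subtraction from $d_r$); when $k_r=0$ one simply sets $F_r=d_r$ at no cost. Summing over rows, both the total number of multiplications and the total number of additions equal $\sum_r k_r$, i.e. the number of nonnull entries of $\mathbf{T}$ below the main diagonal. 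This already establishes the two equalities asserted in the statement and turns the rest into pure counting.

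Next I would count the subdiagonal entries row by row from the explicit description of $T_{\Psi(i,a),\cdot}$. The crucial observation is that the $c=1$ term of the sum in (\ref{eq:Fai}) (respectively (\ref{eq:F1})) occupies the same column $\Psi(i+1,1)$ as the leading $F^{(1)}(i+1)$ term, so the two coalesce into a single nonnull entry, whereas the terms $c=2,\dots,N-i-1$ each occupy a distinct column $\Psi(i+c,c)$. Hence each row $\Psi(i,a)$ with $1\le i\le N-2$ contributes exactly $N-i-1$ nonnull subdiagonal entries, the rows with $i=N-1$ contribute none (since $F^{(a)}(N-1)$ has no unknowns on its right-hand side), and one checks that the uniform count $N-i-1$ also applies to the $F(1)$ row, which is the $i=1$ case. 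Using that $\Psi$ is injective on admissible index pairs guarantees no column is double-counted beyond the intended $c=1$ merger.

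Finally, since for each $i$ with $2\le i\le N-2$ there are $i-1$ admissible values of $a$, while the $F(1)$ row contributes $N-2$, the total number of subdiagonal nonnull entries is
\[
(N-2)+\sum_{i=2}^{N-2}(i-1)(N-i-1).
\]
Setting $m=N-2$ rewrites the sum as $\sum_{j=1}^{m-1} j(m-j)=\tfrac{1}{6}(m-1)m(m+1)$, and substituting back and expanding yields $(N^3-6N^2+17N-18)/6$, as claimed. I expect the main obstacle to be purely bookkeeping: handling the $c=1$ coalescence correctly so that no column is counted twice, confirming that the merged coefficient of $F^{(1)}(i+1)$ is genuinely nonzero, and evaluating the quadratically weighted sum without arithmetic slips.
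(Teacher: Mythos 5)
Your proof is correct and follows exactly the route the paper indicates for its proof (which is omitted ``for the sake of space'' but described as the enumeration of operations needed to solve $\mathbf{T}\mathbf{F}=\mathbf{d}$): forward substitution in the unit lower triangular system, so that multiplications and additions each equal the number of nonnull subdiagonal entries of $\mathbf{T}$. Your row-by-row count of $N-i-1$ entries per row $\Psi(i,a)$ (with the $c=1$ term correctly merged into the $F^{(1)}(i+1)$ column, a coefficient that is indeed nonzero for $p\in[0,1)$) and the resulting total $(N-2)+\sum_{i=2}^{N-2}(i-1)(N-i-1)=(N^3-6N^2+17N-18)/6$ check out, e.g.\ yielding $7$ for the paper's displayed $N=5$ matrix.
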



\subsection{Scaling law for the flooding time in the sub-critical regime.}


With our model we can bound the expected flooding time as a function of the network size $N$. To this respect we referred to the point-like bound $F_0(1)$ as the limit case when  $\mu$ diverges. However, it is interesting to compute the scaling law for the network when such bound is actually the limit case for large $N$'s. In fact, our analysis addresses the case when the network is in the sub-critical regime, i.e., with a terminology of random graphs analysis, the graph is disconnected almost surely.

In particular, with respect to the connectivity properties of the system, the dynamic graph parameters $\lambda=\lambda(N)$ and $\mu=\mu(N)$ may depend on the underlying mobility model in non trivial fashion. The question in turn is under which scaling regime the point-like bound is guaranteed to be still the limit of a scaling law which is consistent with the sub-critical regime (when the graph is connected indeed we expect the bound to be zero). In other words: as $N$ diverges, how should the contact time duration and the intermeeting time scale in order to approach the point-wise limit while still in the sub-critical regime for every $N$? In practice, consistent scaling laws 
are ensured by the following result.

\begin{theorem}\label{thm:scaling}
Let $\frac \mu\lambda = \Omega\big( \frac N{\log N}\big)$, then $F(1,N\!-\!1)=O(N^{-1} \log N)$. \hfill $\Box$
\end{theorem}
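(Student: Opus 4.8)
The plan is to derive the bound directly from the point-like analysis of Section~\ref{sec:Poisspoint}, after first checking that the hypothesis on $\mu/\lambda$ genuinely places the system in the sub-critical regime. I would begin by translating the assumption into a statement on the stationary edge probability. Since $p=\mu^{-1}/(\mu^{-1}+\lambda^{-1})=1/(1+\mu/\lambda)$, the hypothesis $\mu/\lambda=\Omega(N/\log N)$ is equivalent, for some constant $c>0$ and $N$ large, to
\[
p \le \frac{1}{1+c\,N/\log N} \le \frac{\log N}{c\,N}=O\!\Big(\frac{\log N}{N}\Big).
\]
Recalling that $G_0$ is an Erd\"os--R\'enyi graph $G(N,p)$, this keeps $p$ at the order of the connectivity threshold $\log N/N$, so (for an $\Omega$-constant above one) the initial graph is almost surely not fully connected. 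This is precisely the consistency condition anticipated in the discussion: the regime stays sub-critical, and the point-like picture governs the flooding process rather than collapsing to instantaneous diffusion over a connected graph.

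The second step exploits the monotonicity of the flooding time in the connectivity, already invoked in Section~\ref{sec:Poisspoint}: enlarging $p$ can only create larger connected components, hence can only accelerate the epidemic, so that $F(1,N-1)\le F_0(1,N-1)$ for every $\lambda$, $p$ and $N$. Combining this with the closed form and the harmonic-number estimate derived for the point-like case gives
\[
F(1,N-1)\le F_0(1,N-1)=\frac{2}{\lambda N}\,H_{N-1}\le \frac{2\,(1+\ln(N-1))}{\lambda N},
\]
which, with $\lambda$ held fixed as the intermeeting rate (equivalently, measuring time in units of $\lambda^{-1}$, so that $\log$ and $\ln$ differ only by a constant absorbed in the $O(\cdot)$), is of order $N^{-1}\log N$. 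This yields $F(1,N-1)=O(N^{-1}\log N)$ and closes the argument.

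The delicate point is not the final inequality, which is immediate, but the correct identification of the scaling and the role of each parameter. One must keep $\lambda$ fixed as a time unit: if $\lambda$ were allowed to vanish while the ratio $\mu/\lambda$ was preserved, the prefactor $1/\lambda$ would diverge and the $O(N^{-1}\log N)$ claim would lose meaning, so the hypothesis should be read as a condition on $\mu$ (shrinking contact durations) at fixed intermeeting rate. Second, one should argue that $\mu/\lambda=\Omega(N/\log N)$ is the right threshold, in the sense that it is the weakest growth of the contact-to-intermeeting ratio that still forces $p=O(\log N/N)$ and hence keeps $G_0$ at or below the Erd\"os--R\'enyi connectivity threshold; for smaller ratios $p$ would cross into the connected regime, where $F(1)$ drops below the point-like order and the point-like limit would no longer describe the network. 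Making this boundary precise---and verifying that on the sub-critical side $F_0(1)$ is the governing $\Theta(N^{-1}\log N)$ scale rather than a loose overestimate---is the part that requires the most care.
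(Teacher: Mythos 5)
Your proposal is correct and follows essentially the same route as the paper's own proof: you translate the hypothesis $\mu/\lambda=\Omega(N/\log N)$ into $p(N)=1/(1+\mu/\lambda)=O(\log N/N)$, placing $G_0$ at or below the Erd\H{o}s--R\'enyi connectivity threshold, and then conclude via the monotonicity bound $F(1,N\!-\!1)\le F_0(1,N\!-\!1)=\frac{2}{\lambda N}H_{N-1}=O(N^{-1}\log N)$ from Sect.~\ref{sec:Poisspoint}, exactly as the paper does. Your closing caveat that $\lambda$ must be held fixed (or bounded away from zero) for the $O(N^{-1}\log N)$ statement to be meaningful is a fair point that the paper leaves implicit.
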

\begin{proof}
Consider the class of graphs $G(N,p)$ where $p=p(N)=\frac{ \lambda}{\mu +  \lambda}$: according to the classic result 
\cite{Bollobas}[Thm 7.3 pp. 164], we can fix a $c \in \R$, and if  $p=\{\log N + c + o(1)\}/N$, then $\lim_{N\rightarrow \infty}P(G_p \;\mbox{is connected} ) \rightarrow e^{e^{-c}}$. Here, under our assumptions, indeed,  there exists $N_0$ such that $\frac \lambda\mu > b \frac{N}{\log N}$, where $b>0$, which implies for $N\geq N_0$
\[
p(N)=\frac{\lambda(N)}{\lambda(N)+\mu(N)}=\frac{1}{1+\frac \mu\lambda}< \frac{\lambda}\mu < \frac 1{b} \frac{\log N}N
\]
so that $p(N)=o\big ( \frac{\log N} N \big )$. Since the point-like case is an upper bound, the statement follows.
\end{proof}

Thm.~\ref{thm:scaling} explains to which extent the point process bound is the limit for our model: this is the case when the intermeeting time grows much faster than the contact duration. Under this law, the network approaches the conditions of a sparse network which is typically used in literature for DTN models \cite{ABD,groenevelt2005message}.


\section{Approximating the flooding time with lower complexity}\label{sec:lowboundF}


In Thm.~\ref{theo:floodTexact} we provided the exact expression of the flooding time $F(1)$. It is the solution of a linear system of equations, which can be solved via iterative substitution. It can be calculated at a cost of $\sim N^3/6$ operations (see Prop. \ref{prop:complFexact}). If the flooding time needs to be computed online, there may arise the need of reducing its computational complexity. Therefore, in this section we aim at approximating $F(1)$ with lower complexity, by providing a lower bound $\underline{F}(1)$ and an upper bound $\overline{F}(1)$ for $F(1)$ in Sects. \ref{sec:lowboundF} and \ref{sec:2ndupperFlood}, respectively.

\subsection{A lower bound for the flooding time} \label{sec:lowbsub}

In this section we intend to provide a lower bound $\underline{F}(1)$ for the flooding time $F(1)$, whose computation complexity is bounded asymptotically by the complexity of $F(1)$ (see Prop. \ref{prop:complFexact}). To this aim, we need to simplify the expression in (\ref{eq:F1}) by formulating some convenient approximations. Before going deeper into the analysis, we still need to define two concepts.
\begin{definition}
$S_i^{(a)}(N)$ is defined as the event that, among $i$ \emph{i}nformed terminals, only $a\le i$ of them are possibly \emph{a}ctive, i.e. connected with some of the $N-i$ uninformed nodes. In details, let $\mathcal{I}\subset\mathcal{N}$ be the set of informed terminals ($|\mathcal{I}|=i$) and let $\mathcal{A}\subseteq\mathcal{I}$ ($|\mathcal{A}|=a$). Then, $i)$ all the $a(N-i)$ edges between $\mathcal{A}$ and $\mathcal{N}\setminus\mathcal{I}$ and $ii)$ all the edges within $\mathcal{N}\setminus\mathcal{I}$ are independently in state ON with probability $p$. Also, $iii)$ all edges between $\mathcal{I}\setminus\mathcal{A}$ and $\mathcal{N}\setminus\mathcal{I}$ are OFF with probability $1$.
\end{definition}
\begin{figure}[h]
\centering
\includegraphics[width=.6\textwidth]{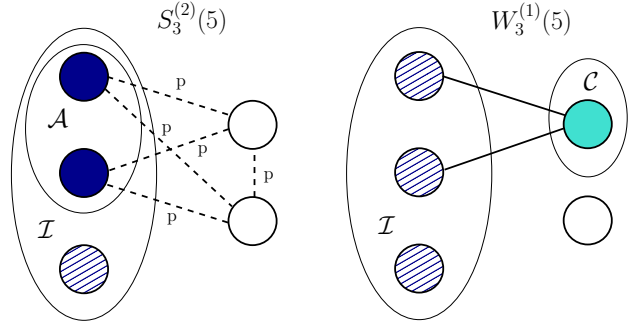}
\caption{Illustrations of an instance of the event $S_3^{(2)}(5)$ (on the left side) and of $W_3^{(1)}(5)$ (on the right side). The dashed edges are ON with probability $p$.}
\label{fig:Sia}
\end{figure}
\begin{definition}
$W_i^{(c)}(N)$ is defined as the event that $c$ nodes are \emph{c}onnected to the $i$ informed nodes. More formally, let $\mathcal{I}\subset\mathcal{N}$ be the set of informed terminals ($|\mathcal{I}|=i$) and let $\mathcal{C}\subseteq\mathcal{N}\setminus\mathcal{I}$ ($|\mathcal{C}|=c$). Then, for any terminal $n\in \mathcal{C}$ there exists a link in state ON between $n$ and an informed node $n'\in \mathcal{I}$. Moreover, all the edges between $\mathcal{I}$ and $\mathcal{N}\setminus(\mathcal{I}\cup\mathcal{C})$ are in state OFF with probability 1. Hence, the message is forwarded instantaneously to all the nodes in $\mathcal{C}$.
\end{definition}

It is straightforward to see that the following relation holds:
\be \label{eq:correct}
\left(W_i^{(c)}(N),S_i^{(a)}(N)\right) = \ S_{i+c}^{(c)}(N).
\ee
In other words, the message is transmitted at time $t^+$ to the $c$ connected nodes, which are the only ones which can possibly forward the message to some of the remaining $N-i-c$ uninformed nodes at time $t^+$. In fact, the event $(W_i^{(c)},S_i^{(a)})$ rules out the possibility that the $a$ nodes possibly active at time $t$ are capable to retransmit the message at time $t^+$.

Instead, in order to compute a lower bound for $F(1)$ we \emph{assume} that all the $i+c$ informed nodes are still able to spread the message to the other nodes with probability $p$. More formally, we replace the correct equation on the left-hand side of the following expression:
\be \label{eq:apprLB}
\left(W_i^{(c)}(N),S_i^{(a)}(N)\right) = \ S_{i+c}^{(c)}(N) \ \stackrel{\mathrm{replace}}{\longrightarrow} \ S_{i+c}^{(i+c)}(N).
\ee
with the right-hand side one (compare with Eq. \ref{eq:correct}). In other words, by (\ref{eq:apprLB}) we overestimate the activity of the edges, by assigning a probability of being active $p$ also to those edges that are known being in state OFF. Under this approximation, the flooding process speeds up, and we can lower bound the correct expression of $F(1)$ in (\ref{eq:F1}) with $\underline{F}(1)$, that we define in the following.
\begin{proposition} \label{prop:lowboundF1}
The flooding time under the approximation in (\ref{eq:apprLB}) is $\underline{F}(1)$, where
\begin{align*}
\underline{F}(1) = & \ (1-p)^{N-1}\left[ \frac{1}{\lambda(N-1)} +  \underline{F}(2)\right] + \\ & \ + \sum_{i=1}^{N-2} {N-1\choose i} p^{i}(1-p)^{N-1-i} \,\underline{F}(i+1),
\end{align*}
and where $\displaystyle \underline{F}(i) = \ (1-p)^{i(N-i)}\left[ \frac{1}{\lambda i(N-i)} +  \underline{F}(i+1)\right] +$
\[
+ \sum_{c=1}^{N-i-1} {N-i\choose c} \left[1-(1-p)^i\right]^c(1-p)^{i(N-i-c)} \,\underline{F}(i+c). \label{eq:lowerT}
\]
for $2\le i\le N-2$, and $\underline{F}(N-1)=(1-p)^{N-1}/(\lambda(N-1))$. Moreover, $\underline{F}(1)$ is a lower bound for the flooding time $F(1)$. \hfill $\Box$
\end{proposition}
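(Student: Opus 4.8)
The statement bundles two assertions: that the displayed system is \emph{exactly} the flooding time produced by the substitution rule (\ref{eq:apprLB}), and that this quantity bounds $F(1)$ from below. I would treat them in turn. For the first, the plan is to rerun the first-step analysis behind Theorem~\ref{theo:floodTexact} on the modified dynamics in which, after every spreading event, \emph{all} informed nodes are declared active. Under (\ref{eq:apprLB}) the configuration $S_{i+c}^{(c)}$ is systematically replaced by $S_{i+c}^{(i+c)}$, so a reachable state is fully described by the number $i$ of informed nodes alone (the active count is always $i$); hence I need only the single family $\underline{F}(i)$ rather than the two-index family $F^{(a)}(i)$. Conditioning on the edge states at the instant the $i$ nodes have just become informed, each uninformed node attaches to the informed set independently with probability $1-(1-p)^i$, so the number $c$ of nodes informed at the next instant is $\mathrm{Bin}\!\left(N-i,\,1-(1-p)^i\right)$; if $c=0$ (probability $(1-p)^{i(N-i)}$) the process waits for the first of the $i(N-i)$ informed–uninformed edges to switch ON, an $\mathrm{Exp}(\lambda i(N-i))$ time, after which one node joins. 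Collecting these cases — and using that after the step all informed nodes are again active — reproduces exactly the displayed recursion, with boundary value $\underline{F}(N-1)=(1-p)^{N-1}/(\lambda(N-1))$. Equivalently, one simply sets $a=i$ in (\ref{eq:Fai})–(\ref{eq:F1}) and relabels every target $F^{(\cdot)}(\cdot)$ as $\underline{F}(\cdot)$.

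For the lower bound I would prove, by backward induction on $i$ from $N-1$ down to $2$, the comparison $\underline{F}(i)\le F^{(a)}(i)$ for all $1\le a\le i$. The base case $i=N-1$ is immediate from $F^{(a)}(N-1)=(1-p)^a/(\lambda(N-1))$ and $1-p\le 1$, which gives $\underline{F}(N-1)=F^{(N-1)}(N-1)\le F^{(a)}(N-1)$. For the inductive step I first invoke the induction hypothesis to replace every target in (\ref{eq:Fai}), obtaining
\begin{align*}
F^{(a)}(i)\ \ge\ & (1-p)^{a(N-i)}\Big(\tfrac{1}{\lambda i(N-i)}+\underline{F}(i+1)\Big) \\
& {}+\sum_{c=1}^{N-i-1}\binom{N-i}{c}\big[1-(1-p)^a\big]^c(1-p)^{a(N-i-c)}\,\underline{F}(i+c)\ =:\ G^{(a)}(i),
\end{align*}
so it remains only to show $G^{(a)}(i)\ge G^{(i)}(i)=\underline{F}(i)$ for $a\le i$, i.e.\ that $G^{(a)}(i)$ is non-increasing in $a$. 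The top level then follows directly: (\ref{eq:F1}) and the displayed formula for $\underline{F}(1)$ carry identical coefficients (both correspond to a single active source), so replacing each target $F^{(c)}(c+1)$ by $\underline{F}(c+1)$ via the just-proved comparison yields $F(1)\ge\underline{F}(1)$.

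The monotonicity of $G^{(a)}(i)$ in $a$ is the crux, and I expect it to be the main obstacle. The clean way to see it is to write $G^{(a)}(i)=\frac{(1-p)^{a(N-i)}}{\lambda i(N-i)}+\E\big[\underline{F}(i+\max(C_a,1))\big]$, where $C_a\sim\mathrm{Bin}\!\left(N-i,\,1-(1-p)^a\right)$ is the number of uninformed nodes attached to the active set and $\underline{F}(N):=0$. The waiting-cost prefactor $(1-p)^{a(N-i)}$ is manifestly decreasing in $a$, so for the second term I would invoke two monotonicity facts: (i) $C_a$, hence $\max(C_a,1)$, is stochastically increasing in $a$ (binomials with increasing success probability are stochastically ordered); and (ii) $\underline{F}(\cdot)$ is non-increasing in its argument, whence $\E[\underline{F}(i+\max(C_a,1))]$ is non-increasing in $a$ by stochastic domination. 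Fact (ii) is itself an auxiliary lemma (``more informed nodes cannot slow down flooding''), which I would establish either by a parallel backward induction on the same representation, or more transparently by coupling two copies of the approximate process started from nested informed sets and driven by common edge randomness, so that the larger informed set stays larger and finishes no later.

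Finally I would note that the whole lower bound admits a purely probabilistic proof bypassing the algebra: couple the exact and approximate processes by common edge ON/OFF samples, additionally resampling (ON with probability $p$) those edges the exact process knows to be OFF. Since the approximation only ever enlarges the family of potentially-ON edges, the approximate informed set remains a superset of the exact one at all times, so its flooding time is almost surely no larger and taking expectations gives $\underline{F}(1)\le F(1)$. Either route reduces to the same delicate point — justifying that declaring extra nodes active genuinely accelerates the spread — which is precisely where the stochastic-domination / coupling argument does the real work.
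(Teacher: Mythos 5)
Your first half is correct and is essentially the paper's own proof: the paper obtains the recursion exactly as you do, by substituting (\ref{eq:apprLB}) into the first-step equations (\ref{eq:F1n1})--(\ref{eq:Fhnh}) from the proof of Theorem~\ref{theo:floodTexact}, i.e.\ setting $a=i$ everywhere, relabelling $F^{(i)}(i)$ as $\underline{F}(i)$, and reusing (\ref{eq:remind}) for the transition probabilities; your treatment of the $c=0$ case (pay the exponential wait, then restart from $i+1$ with all nodes active) and of the boundary value matches the stated system. Be aware, though, that the paper never formally proves the inequality $\underline{F}(1)\le F(1)$: its appendix proof stops at the construction (``the thesis follows by inspection''), and the bound is justified only by the informal remark preceding the proposition that overestimating edge activity speeds up the flooding. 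Your attempt to supply a rigorous argument therefore goes beyond the paper --- but the mechanism you propose fails.

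The gap is your fact (ii): $\underline{F}(\cdot)$ is \emph{not} non-increasing in its argument. Take $N=4$, $p=1/2$, $\lambda=1$, and write $q=1-p$. The boundary value gives $\underline{F}(3)=q^3/3=1/24$, while the recursion gives $\underline{F}(2)=q^4/4+\bigl[q^4+2(1-q^2)q^2\bigr]\underline{F}(3)=1/64+(7/16)\cdot(1/24)=13/384<16/384=\underline{F}(3)$. The failure is intrinsic to the approximation: every transition resamples \emph{all} informed--uninformed edges as fresh Bernoulli$(p)$, so a trajectory started from $2$ informed nodes collects an extra free instantaneous trial when it passes through state $3$, whereas a trajectory started at $3$ gets a single such trial before paying the exponential wait. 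This has three consequences for your argument. First, the stochastic-domination step collapses: $C_a$ is indeed stochastically increasing in $a$, but without monotone $\underline{F}$ this gives no control of $\E\bigl[\underline{F}(i+\max(C_a,1))\bigr]$. Second, the auxiliary lemma ``more informed nodes cannot slow down flooding'' under the approximate dynamics is false, and no coupling of two approximate processes from nested informed sets with the larger set finishing no later can exist, since it would imply $\underline{F}(3)\le\underline{F}(2)$, contradicting the computation above. Third, your closing exact-vs-approximate coupling is explicitly deferred to this same (false) monotonicity, so it is gapped as written as well. The inequality $G^{(a)}(i)\ge G^{(i)}(i)$ that you actually need may still be true --- in the example above it holds because the waiting-cost term $(1-p)^{a(N-i)}/(\lambda i(N-i))$ dominates --- but establishing it requires an argument comparing the two dynamics from the \emph{same} state $S_i^{(a)}$ and exploiting the joint structure of the coefficients, not stochastic monotonicity in the informed count.
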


We notice that the expression of $\underline{F}(1)$ can be computed iteratively, starting from $\underline{F}(N-1)$, and then utilizing the expressions of $\underline{F}(N-1),\dots,\underline{F}(i+1)$ to compute $\underline{F}(i)$, for $i=N-2,\dots,1$.\\

\subsubsection{Lower bound for $p$ small} \label{sec:lowboundpsmall}

Under Assumption (\ref{eq:apprLB}), we actually consider the OFF edges as having a probability $p$ of being ON. Intuitively, if the stationary probability $p$ tends to 0, $\underline{F}(1)$ should approximate the actual flooding time $F(1)$ with increasing accuracy. We now prove this formally and we also provide an expression of the lower bound $\underline{F}(1)$ for $p$ small, in a similar fashion as we did for the exact flooding time $F(1)$ in Sect.~\ref{sec:floodTpsmall}.

\begin{lemma} \label{lem:Flowpsmall}
The lower bound $\underline{F}(1)$ of the flooding time can be written for values of $p$ sufficiently close to 0 as
\[
\underline{F}(1) = \, F_0(1) -\lambda^{-1}(N-1)\,p+ o(p). \quad \Box
\] 
\end{lemma}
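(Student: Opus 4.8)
The plan is to linearize the recursion of Proposition~\ref{prop:lowboundF1} around $p=0$ and read off the coefficient of $p$. First I would pin down the zeroth-order term. Setting $p=0$ in the defining relation for $\underline{F}(i)$ kills every binomial summand, since $(1-p)^k\to 1$ while $[1-(1-p)^i]^c\to 0$ for $c\ge 1$; what survives is exactly $\underline{F}(i)\big|_{p=0}=\frac{1}{\lambda i(N-i)}+\underline{F}(i+1)\big|_{p=0}$, together with $\underline{F}(N-1)\big|_{p=0}=\frac{1}{\lambda(N-1)}$. This is precisely the recursion defining $F_0(i)$ in Sect.~\ref{sec:Poisspoint}, so $\underline{F}(i)\big|_{p=0}=F_0(i)$ for all $i$, and in particular the constant term of $\underline{F}(1)$ is $F_0(1)$, matching the statement.

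Next I would write $\underline{F}(i)=F_0(i)+g(i)\,p+o(p)$ and determine $g$. The key elementary expansions are $(1-p)^k=1-kp+o(p)$ and $1-(1-p)^i=ip+o(p)$, so that $[1-(1-p)^i]^c=O(p^c)$. The latter shows that only the $c=1$ binomial term contributes to first order; all terms with $c\ge 2$ are $o(p)$. Expanding the two surviving contributions, the leading ``$c=0$'' block $(1-p)^{i(N-i)}\big[\tfrac{1}{\lambda i(N-i)}+\underline{F}(i+1)\big]$ contributes $F_0(i)+\big[g(i+1)-i(N-i)F_0(i)\big]p$, using $\tfrac{1}{\lambda i(N-i)}+F_0(i+1)=F_0(i)$, while the $c=1$ term contributes $i(N-i)F_0(i+1)\,p$ (here one only needs the zeroth order of the factors $(1-p)^{i(N-i-1)}$ and $\underline{F}(i+1)$, as they multiply a quantity already $O(p)$). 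Collecting first-order terms and invoking $F_0(i)-F_0(i+1)=\frac{1}{\lambda i(N-i)}$ collapses everything to the clean scalar recursion
\[
g(i)=g(i+1)-i(N-i)\big[F_0(i)-F_0(i+1)\big]=g(i+1)-\frac{1}{\lambda}.
\]

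Finally I would supply the boundary value and telescope. From $\underline{F}(N-1)=(1-p)^{N-1}/(\lambda(N-1))=F_0(N-1)-\frac{1}{\lambda}p+o(p)$ one reads $g(N-1)=-1/\lambda$. The same first-order recursion also governs the top level, because the $\underline{F}(1)$ equation is just the general recursion specialized to $i=1$ (with $[1-(1-p)^1]^c=p^c$), so no separate argument is needed there. Telescoping $g(i)=g(i+1)-1/\lambda$ from $i=N-1$ down to $i=1$ gives $g(i)=-(N-i)/\lambda$, hence $g(1)=-(N-1)/\lambda=-\lambda^{-1}(N-1)$, which yields the claimed expansion $\underline{F}(1)=F_0(1)-\lambda^{-1}(N-1)\,p+o(p)$.

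I do not expect a genuine analytic obstacle: since $N$ is fixed and the recursion has finitely many steps, the $o(p)$ remainders propagate harmlessly. The only point requiring care is the bookkeeping of orders in $p$, namely verifying that the $c\ge 2$ summands are $o(p)$ and that the $c=1$ summand needs only the zeroth-order parts of its remaining factors; once that is settled, the coefficient recursion and its telescoping are routine.
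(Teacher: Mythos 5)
Your proof is correct and takes essentially the same approach as the paper's: both linearize the triangular recursion of Proposition~\ref{prop:lowboundF1} around $p=0$, observe that the $c\ge 2$ summands are $o(p)$ while the first-order contributions multiplying $\underline{F}(i+1)$ cancel, and telescope a per-step decrement of $\lambda^{-1}p$ over the $N-1$ levels. The paper merely packages this computation in matrix form ($\mathbf{A}\,\underline{\mathbf{F}}=\mathbf{c}$ with $A_{i,i+1}=-1+o(p)$, $A_{i,i+c}=o(p)$ for $c\ge 2$, and $c_i(p)=\tfrac{1}{\lambda i(N-i)}-\lambda^{-1}p+o(p)$), whereas you use the explicit ansatz $\underline{F}(i)=F_0(i)+g(i)\,p+o(p)$ with the recursion $g(i)=g(i+1)-\lambda^{-1}$; the underlying bookkeeping is identical.
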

\vspace{.2cm}

Lemma \ref{lem:Flowpsmall} confirms the intuition that when $p$ tends to 0, then $\underline{F}(1)\rightarrow F_0(1)$. Moreover, in a right neighbourhood of $p=0$, $\underline{F}(1)$ decreases linearly in $p$, and proportionally to the number of nodes $N$. The reader should compare this result with Lemma \ref{lem:floodTpsmall}, claiming that the exact flooding time $F(1)$ decreases as $\lambda^{-1} H_{N-1} p$ for $p$ small, where $H_N$ diverges as $\log(N)$.\\

\subsubsection{Computational complexity}

In order to compute $\underline{F}(1)$ we need to solve an $(N\!-\!1)$-by-$(N\!-\!1)$ linear system of equations of the form $\mathbf{A} \, \underline{\mathbf{F}} = \, \mathbf{c}$ (see proof of Lemma \ref{lem:Flowpsmall}), where the unknowns are $\underline{F}(i)$ for $i=1,\dots,N\!-\!1$. The number of operations required to solve the linear system equals the number of elements of the matrix $\mathbf{A}$ above the main diagonal. The following result follows. We leave its proof to the reader.

\begin{proposition} \label{prop:compllow}
The number of operations required to compute the flooding time lower bound $\underline{F}(1,N,1)$ is $\underline{\mathcal{C}}_N={{N-1}\choose{2}}$. \hfill $\Box$
\end{proposition}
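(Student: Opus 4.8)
The plan is to mirror the counting argument used for the exact flooding time in Prop.~\ref{prop:complFexact}, exploiting the triangular structure of the linear system $\mathbf{A}\,\underline{\mathbf{F}}=\mathbf{c}$ announced in the text. First I would put the recursion of Prop.~\ref{prop:lowboundF1} into matrix form: ordering the unknowns as $\underline{\mathbf{F}}=[\underline{F}(1),\dots,\underline{F}(N-1)]^T$ and moving every $\underline{F}(\cdot)$ occurring on the right-hand side of the defining identity for $\underline{F}(i)$ to the left, each equation becomes a row of $\mathbf{A}$ with diagonal entry $1$. Since $\underline{F}(i)$ is expressed solely in terms of $\underline{F}(i+1),\dots,\underline{F}(N-1)$ plus a constant, the matrix $\mathbf{A}$ is \emph{upper} triangular with unit diagonal, and the base case $\underline{F}(N-1)=(1-p)^{N-1}/(\lambda(N-1))$ supplies an all-zero off-diagonal last row.

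The core of the argument is to count, for each $i$, the number of nonzero off-diagonal entries in row $i$, i.e.\ the distinct unknowns appearing on the right-hand side of the equation for $\underline{F}(i)$. For $2\le i\le N-2$ the sum $\sum_{c=1}^{N-i-1}$ ranges over $\underline{F}(i+1),\dots,\underline{F}(N-1)$, while the leading term $(1-p)^{i(N-i)}[\,\cdot\,+\underline{F}(i+1)]$ contributes only $\underline{F}(i+1)$, which already appears in the $c=1$ summand. The delicate point — and the one I expect to be the main obstacle — is precisely this bookkeeping: $\underline{F}(i+1)$ arises twice but must be folded into a \emph{single} matrix entry, so the correct per-row count is $N-1-i$ distinct columns, not $N-i$. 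I would then check the boundary rows: $i=1$ yields the unknowns $\underline{F}(2),\dots,\underline{F}(N-1)$, i.e.\ $N-2=N-1-i$ entries, and $i=N-1$ yields $0$, confirming the uniform formula $N-1-i$ for all $1\le i\le N-1$.

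Finally, since $\mathbf{A}$ has unit diagonal, solving by back-substitution requires no divisions: computing $\underline{F}(i)$ from the already-known $\underline{F}(i+1),\dots,\underline{F}(N-1)$ costs exactly one multiplication and one addition per off-diagonal nonzero of row $i$. Hence, exactly as in Prop.~\ref{prop:complFexact}, the number of multiplications (equivalently, of additions) equals the number of nonzero entries of $\mathbf{A}$ above the diagonal. Summing the per-row counts then yields
\[
\underline{\mathcal{C}}_N=\sum_{i=1}^{N-1}(N-1-i)=\sum_{k=0}^{N-2}k=\frac{(N-2)(N-1)}{2}=\binom{N-1}{2},
\]
which is the claimed complexity.
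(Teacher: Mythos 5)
Your proposal is correct and follows exactly the route the paper indicates (the paper leaves the proof to the reader, but its text preceding Prop.~\ref{prop:compllow} states that the count equals the number of entries of the upper-triangular matrix $\mathbf{A}$ from $\mathbf{A}\,\underline{\mathbf{F}}=\mathbf{c}$ above the main diagonal, which is $\binom{N-1}{2}$). Your careful per-row bookkeeping --- in particular folding the two occurrences of $\underline{F}(i+1)$ into the single entry $A_{i,i+1}$, consistent with the expression for $A_{i,i+1}$ in the paper's proof of Lemma~\ref{lem:Flowpsmall} --- confirms that all $N-1-i$ above-diagonal entries of row $i$ are generically nonnull, so your back-substitution count matches the paper's claim.
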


By comparing the computational complexity required to compute the exact flooding time $F(1)$ and its lower bound $\underline{F}(1)$ (see Prop. \ref{prop:complFexact} and \ref{prop:compllow}, respectively), we see that $\underline{\mathcal{C}}_N < \mathcal{C}_N$ for $N\ge 5$, and $\underline{\mathcal{C}}_N =\mathcal{C}_N$ for $N\le 4$. Moreover,
\[
\underline{\mathcal{C}}_N \sim \, \frac{N^2}{2}, \quad \mathcal{C}_N \sim \, \frac{N^3}{6}.
\]
Hence, approximating the exact expression of the flooding time $F(1)$ via the lower bound $\underline{F}(1)$ becomes more and more convenient when the number of terminals $N$ increases.

\subsection{An upper bound for the flooding time} \label{sec:2ndupperFlood}

After providing a lower bound for the flooding time $F(1)$ with low complexity, we now devise an upper bound, called $\overline{F}(1,N\!-\!1)$. Although its computational complexity equals the one of the exact flooding time, its expression is recursive in the number of terminals $N$. Hence, the computation of $\overline{F}(1,N\!-\!1)$ already yields $\overline{F}(1,K-1)$ for all $K<N$. As such, the complexity required to calculate the first $N$ values of the bound is $O(N^3)$, whereas $O(N^4)$ operations are needed for the exact value of the flooding time.\\
We observed in Sect. \ref{sec:Poisspoint} that the exact flooding time with point-like contacts, $F_0(1,N\!-\!1)$, already provides an upper bound for $F(1,N\!-\!1)$. We compare these two bounds in Sect. \ref{sec:numres}.

Let us describe the assumption which allows us to derive the expression of the upper bound $\overline{F}$. Let $\mathcal{I}$ be the set of informed nodes at time $t$. As soon as a connection between $\mathcal{I}$ and some $\mathcal{C}\subseteq\mathcal{N}$ is established at time $t'\ge t$, the message is copied instantaneously to $\mathcal{C}$. Then, we \emph{assume} that all the terminals in $\mathcal{I}$ deactivate and no longer participate in the flooding process. This corresponds  to progressively remove informed nodes from the original system  and to consider the flooding process on the remaining subgraph.

More specifically, in order to derive the upper bound we perform the following substitution:
\be \label{eq:apprUB}
\left(W_i^{(c)}(N),S_i^{(a)}(N)\right) = \ S_{i+c}^{(c)}(N) \ \stackrel{\mathrm{replace}}{\longrightarrow} \ S_c^{(c)}(N-i).
\ee
Clearly, the spreading process under the approximation (\ref{eq:apprUB}) is slower than in the exact case because the number of nodes capable of retransmitting the message decreases over time.\\ 
Thus, the flooding time calculated under assumption (\ref{eq:apprUB}) provides an upper bound for $F(1)$, namely $\overline{F}(1)$.
\begin{proposition} \label{prop:Floodupp}
The flooding time under the approximation in (\ref{eq:apprUB}) is $\overline{F}(1)$, where \vspace{-.2cm}
\begin{align*}
\overline{F}(1) = & \ (1-p)^{N-1}\left[ \frac{1}{\lambda(N-1)} +  \overline{F}(1,N-2)\right] + \\ & \ + \sum_{c=1}^{N-2} {N-1\choose c} p^c(1-p)^{N-1-c} \,\overline{F}(c,N-c-1),
\end{align*}
where, for $2\le i \le n-2$,
\begin{align*}
\overline{F}(i,n-i) = & \ (1-p)^{i(n-i)}\left[ \frac{1}{\lambda i(n-i)} +  \overline{F}(1,n-i-1)\right] + \\ \sum_{c=1}^{n-i-1} {n-i\choose c} & \Big[1-(1-p)^{i}\Big]^c(1-p)^{i(n-i-c)} \,\overline{F}(c,n-i-c), \\
\overline{F}(n-1,1) = & \ (1-p)^{n-1}\frac{1}{\lambda(n-1)}.
\end{align*}
Moreover, $\overline{F}(1)$ is an upper bound for the exact flooding time $F(1)$. \hfill $\Box$
\end{proposition}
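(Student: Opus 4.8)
The plan is to establish the two claims of the proposition separately: first, to \emph{derive} the stated recursion as the exact flooding time of the modified dynamics prescribed by the substitution (\ref{eq:apprUB}); and second, to prove that this modified dynamics is genuinely slower than the true one, so that its expected completion time upper-bounds $F(1)$. The first part is a first-step renewal analysis in the spirit of Theorem~\ref{theo:floodTexact}, the only change being the bookkeeping of the post-spread state; the second part I would handle by a sample-path coupling rather than by comparing the two recursions term by term.

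For the derivation, fix a decision epoch at which the subgraph under consideration consists of $i$ \emph{active} informed nodes and $n-i$ uninformed nodes, and condition on the $i(n-i)$ edges joining the two sets, each independently ON with probability $p$. If at least one uninformed node is reached --- the event $W^{(c)}_i$ for some $c\ge 1$, of probability ${n-i\choose c}[1-(1-p)^i]^c(1-p)^{i(n-i-c)}$ --- the message is copied instantaneously, and the substitution (\ref{eq:apprUB}) replaces the true successor state $S^{(c)}_{i+c}(n)$ by $S^{(c)}_c(n-i)$: the $i$ old nodes are deleted and the process restarts on the residual $(n-i)$-node graph with the $c$ fresh nodes as the only spreaders, contributing the term $\overline{F}(c,n-i-c)$. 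If instead all $i(n-i)$ edges are OFF --- probability $(1-p)^{i(n-i)}$ --- no node is reached and one waits for the first of these edges to switch ON; by the exponential intermeeting law (rate $\lambda$) this waiting time has mean $1/(\lambda\,i(n-i))$, exactly one node is informed, and (\ref{eq:apprUB}) sends the state to $S^{(1)}_1(n-i)$, contributing $\overline{F}(1,n-i-1)$. Summing the two cases yields the displayed recursion for $\overline{F}(i,n-i)$; the source level $i=1$ (with $1-(1-p)^1=p$) gives the expression for $\overline{F}(1)$, and the single-uninformed-node level gives the boundary value $\overline{F}(n-1,1)=(1-p)^{n-1}/(\lambda(n-1))$. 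Two facts make this exact for the modified dynamics: since edges evolve independently, every edge not yet inspected is at its stationary state (ON with probability $p$) at any stopping time defined through the others, so each reset graph is genuinely a fresh $G(n-i,p)$; and since contacts are copied instantaneously, only the OFF$\to$ON transitions matter, whose memoryless (rate $\lambda$) law legitimises restarting the waiting clock at each epoch.

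For the upper-bound claim I would realise both processes on a single probability space carrying the $\binom{N}{2}$ independent alternating renewal edge processes of the model. The true process informs, at each OFF$\to$ON transition, every uninformed node thereby connected to the informed set; the modified process obeys the same rule except that it \emph{forbids transmissions from de-activated (old) informed nodes}, equivalently it disregards the states of all edges incident to them --- precisely the content of replacing $S^{(c)}_{i+c}(n)$ by $S^{(c)}_c(n-i)$ in (\ref{eq:apprUB}), as opposed to the exact update of (\ref{eq:correct}). Since removing transmission opportunities can neither un-inform a node nor block a path through an already-informed (hence ``transparent'') node, a routine induction over the almost surely countable and almost surely non-simultaneous activation epochs shows that the informed set of the modified process is contained in that of the true process at every time $t$. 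Consequently the modified process reaches full information no sooner on each sample path, so its completion time pathwise dominates $f(1)$, and taking expectations gives $\overline{F}(1)\ge F(1)$.

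I expect the delicate point to be the coupling, specifically verifying that the coupled modified process has exactly the law encoded by the recursion of the first part. The subtlety is that the substitution (\ref{eq:apprUB}) \emph{forgets} the deleted nodes and treats each residual graph as a fresh $G(n-i,p)$ with fresh exponential clocks, which a priori looks like a re-randomisation rather than a restriction of the original dynamics. Reconciling the two descriptions is exactly where the independence of the edge processes and the memorylessness of the OFF (intermeeting) periods are needed: they guarantee that ignoring the deleted edges leaves the surviving edges in their stationary product state with exponential residual OFF times, so the ``fresh graph'' of the recursion and the ``restricted dynamics'' of the coupling coincide. Once this identification is secured, the monotonicity --- fewer spreaders can only delay flooding --- is intuitively transparent and the pathwise containment argument closes the proof.
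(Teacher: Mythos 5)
Your proposal is correct, and its two halves relate to the paper's proof quite differently. The derivation of the recursion is essentially the paper's own argument: the paper proves the proposition by formally performing the substitution (\ref{eq:apprUB}) inside the conditional-expectation recursions (\ref{eq:F1n1}) and (\ref{eq:Fhnh}) from Theorem~\ref{theo:floodTexact}, defining $F''(i,n-i):=\E\bigl[f(1)\mid S_i^{(i)}(n)\bigr]$, inserting the probabilities (\ref{eq:remind}) with $a=i$, and concluding ``by inspection''; your first-step analysis --- the all-OFF branch with mean wait $1/(\lambda i(n-i))$ and reset to $\overline{F}(1,n-i-1)$, versus the $W_i^{(c)}$ branch with reset to $\overline{F}(c,n-i-c)$, plus the boundary level $\overline{F}(n-1,1)$ --- is exactly that computation written out. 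Where you genuinely depart from the paper is the domination claim: the appendix proof does not address it at all, and the body disposes of it with the one-line remark that the approximate process is ``clearly'' slower because the set of retransmitters shrinks over time. Your sample-path coupling (same edge trajectories for both processes, the modified one ignoring all edges incident to deactivated nodes, induction over the a.s.\ distinct activation epochs giving containment of the informed sets, hence pathwise domination of the completion times) supplies the rigorous argument the paper omits, and you correctly isolate the one nontrivial verification: that the restricted dynamics has the law encoded by the recursion even though the recursion appears to re-randomize each residual graph as a fresh $G(n-i,p)$. Your resolution --- edges never yet inspected are independent of the stopping times generated by the inspected ones and hence stationary (ON with probability $p$) there, while inspected-OFF edges have exponential residual OFF times by memorylessness, the general ON-duration law never entering because transmissions are instantaneous --- is exactly right; it is, incidentally, also the unstated reason the exact recursion of Theorem~\ref{theo:floodTexact} itself is valid. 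In short, the paper's substitution argument is more economical but leaves the inequality $\overline{F}(1)\ge F(1)$ at the level of intuition, whereas your version costs a coupling construction and buys a complete proof of it.
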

We defer the proof of Proposition \ref{prop:Floodupp} to the Appendix.\\

\subsubsection{Computational complexity}

We now investigate the computational complexity $\overline{\mathcal{C}}_N$ of the upper bound $\overline{F}(1,N\!-\!1)$, defined as the number of operations (i.e., additions or multiplications) required to compute $\overline{F}(1,N\!-\!1)$.

\begin{proposition} \label{prop:complUB}
Let $\overline{\mathcal{C}}_N$ be the number of operations to compute $\overline{F}(1,N\!-\!1)$. Then, $\overline{\mathcal{C}}_N=\mathcal{C}_N$ for all integers $N$. \hfill $\Box$
\end{proposition}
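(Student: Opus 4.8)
The plan is to mimic the argument behind Proposition~\ref{prop:complFexact}: I would recast the recursive evaluation of $\overline{F}(1,N-1)$ as the solution of a lower-triangular linear system with unit diagonal, and then count the non-null entries strictly below the diagonal, since — exactly as in the exact case — each such entry accounts for one multiplication and one addition during the substitution that solves the system. The whole difficulty is therefore reduced to identifying the right set of unknowns and the number of off-diagonal entries per equation.

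First I would determine which subproblems $\overline{F}(i,n-i)$ are actually produced. Reading the recursion in Proposition~\ref{prop:Floodupp}, the right-hand side of the equation for $\overline{F}(i,n-i)$ invokes only quantities $\overline{F}(c,(n-i)-c)$ of smaller total size $n-i<n$, and it invokes \emph{all} of them, for $c=1,\dots,n-i-1$. Starting from the single top quantity $\overline{F}(1,N-1)$ (total size $N$), a downward induction then shows that the computation requires exactly the full family of total size $n$, namely $\overline{F}(i,n-i)$ for $i=1,\dots,n-1$, for every $2\le n\le N-1$, together with $\overline{F}(1,N-1)$ itself. In particular \emph{only one} quantity of total size $N$ is ever produced, a point that must be handled with care since it is what keeps the top-order contribution in step with the exact count.

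Next I would count the cost of each quantity. The equation for $\overline{F}(i,n-i)$ is a linear combination of the $n-i-1$ previously computed values $\overline{F}(c,(n-i)-c)$, $c=1,\dots,n-i-1$, plus the constant $(1-p)^{i(n-i)}/(\lambda i(n-i))$. Exactly as in the matrix $\mathbf{T}$, the contribution of $\overline{F}(1,(n-i)-1)$ coming from the bracketed first term and from the $c=1$ summand must be merged into a single coefficient, so that the equation carries precisely $n-i-1$ non-null off-diagonal entries, while the base case $\overline{F}(n-1,1)$ carries none. Hence computing $\overline{F}(i,n-i)$ costs $n-i-1$ multiplications and as many additions, and summing over all needed quantities gives
\[
\overline{\mathcal{C}}_N \;=\; (N-2)\;+\;\sum_{n=2}^{N-1}\sum_{i=1}^{n-1}(n-i-1)\;=\;(N-2)+\sum_{n=2}^{N-1}{n-1\choose 2},
\]
where the isolated $N-2$ is the cost of the top quantity $\overline{F}(1,N-1)$.

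Finally, by the hockey-stick identity $\sum_{n=2}^{N-1}{n-1\choose 2}={N-1\choose 3}$, so $\overline{\mathcal{C}}_N=(N-2)+(N-1)(N-2)(N-3)/6$; expanding yields $(N^3-6N^2+17N-18)/6$, which is exactly the value of $\mathcal{C}_N$ from Proposition~\ref{prop:complFexact}. I expect the main obstacle to be bookkeeping rather than genuine difficulty: one must pin down precisely which subproblems are generated (and that only $\overline{F}(1,N-1)$ survives at total size $N$), and one must adopt the same merging/counting convention for shared coefficients that was used for $\mathbf{T}$ — otherwise the per-equation count is off by one and the identity $\overline{\mathcal{C}}_N=\mathcal{C}_N$ fails. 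It is worth emphasising that, although the two counts agree, the underlying triangular systems are structurally different — the exact one carries the extra activity index $a$, the bound does not — so the equality is a numerical coincidence of two distinct structures rather than a term-by-term bijection.
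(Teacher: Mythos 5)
Your count is correct and reaches the right total, but you organize it differently from the paper. The paper proves the proposition by induction on the network size: assuming $\overline{F}(1,N-2)$ has been computed with $\overline{\mathcal{C}}_{N-1}$ operations (which, by the recursivity in $N$, already yields every subproblem of total size at most $N-2$), it counts the incremental cost of passing to $N$ nodes --- namely $N-i-2$ operations for each new quantity $\overline{F}(i,N-1-i)$, $i=2,\dots,N-2$, plus $N-2$ for the top equation --- obtaining $\overline{\mathcal{C}}_N=\overline{\mathcal{C}}_{N-1}+(N-2)+(N-4)+(N-5)+\dots+1$, and then unrolls this recursion into a closed form equal to $\mathcal{C}_N$. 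You instead flatten the computation into a single global enumeration: you identify exactly the same set of subproblems (the full families of each total size $2\le n\le N-1$, plus the lone size-$N$ top quantity), apply the same merged-coefficient convention as for $\mathbf{T}$ to get $n-i-1$ off-diagonal entries per equation, and sum directly via the hockey-stick identity to $(N-2)+\binom{N-1}{3}=(N^3-6N^2+17N-18)/6$. Both arguments hinge on the two observations you isolate (which subproblems are generated, and the per-equation coefficient count after merging); your version is self-contained and makes the closed form transparent, while the paper's inductive bookkeeping has the side benefit of directly producing the incremental-cost formula $(N-1)(N-2)/2-N+3$ exploited in Sect.~\ref{sec:2ndupperFlood} and Fig.~\ref{fig:Upp_compl}. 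One quibble with your closing remark: the agreement is less coincidental than you suggest. The map $F^{(a)}(i)\leftrightarrow\overline{F}(a,N-i)$ (sending the top unknown $F^{(1)}(1)$ to $\overline{F}(1,N-1)$) is a bijection between the two sets of unknowns, both of cardinality $(N-1)(N-2)/2+1$, and it preserves the per-equation cost, since the row for $F^{(a)}(i)$ carries $N-i-1$ off-diagonal coefficients while the row for $\overline{F}(a,N-i)$, of total size $a+N-i$, carries $(a+N-i)-a-1=N-i-1$ of them. So a term-by-term correspondence does exist, even though the two triangular systems are shaped differently.
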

\vspace{2pt}

Interestingly, the complexity of $\overline{F}(1,N\!-\!1)$ equals the complexity of the exact flooding time $F(1,N\!-\!1)$.
\begin{figure}
\centering
\includegraphics[width=.6\textwidth]{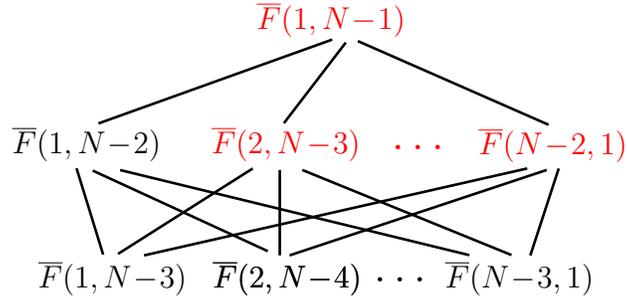}
\caption{The quantities in black are available when $\overline{F}(1,N\!-\!2)$ has already been computed. The quantities in red need to be calculated in order to derive $\overline{F}(1,N\!-\!1)$.}
\label{fig:Upp_compl}
\end{figure}
As anticipated, approximating $F$ as $\overline{F}$ is  computationally profitable when we need to evaluate $\overline{F}(1,N\!-\!1)$ for several values of $N$. Once $\overline{F}(1,N\!-\!2)$ has already been computed indeed, in order to compute the $\overline{F}(1,N\!-\!1)$ we only need to perform an incremental number of operations equal to (compare with Eq.~\ref{eq:mispiace} and Fig.~\ref{fig:Upp_compl})
\[
(N-2) + (N-4)+(N-5)+\dots+1= \, \frac{(N-1)(N-2)}{2}-N+3.
\]

\section{Numerical results}\label{sec:numres}



We begin our numerical investigations by displaying
\begin{figure}
\centering 
\includegraphics[width=.6\textwidth]{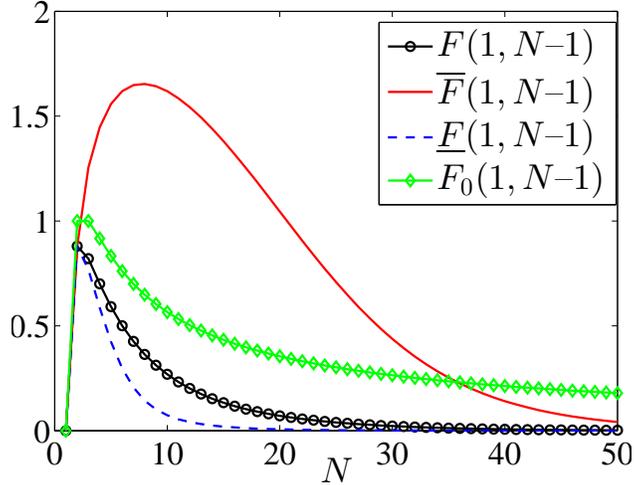}
\caption{Exact flooding time $F(1,N\!-\!1)$ ($\lambda=1$, $p=0.12$), compared with its lower bound $\underline{F}(1,N\!-\!1)$, its upper bound $\overline{F}(1,N\!-\!1)$, and the flooding time in sparse regime $F_0(1,N\!-\!1)$.}
\label{fig:flood4lines}
\end{figure}
in Fig. \ref{fig:flood4lines} the exact flooding time $F(1)$ when the size of the network $N$ varies. We also illustrate the bounds $\underline{F}(1)$, $\overline{F}(1)$, and the flooding time in sparse regime $F_0(1)$. In the next sections we go deeper in the analysis by showing numerically the relations among these quantities for different values of $p$ and $N$.

\subsection{Not always sparse: Comparison with punctual case}

The main objective of this paper is to account for the existence, at each instant, of connected components in the connectivity graph of intermittently connected mobile ad hoc networks. We have shown that the sparse regime usually studied in DTNs is a special instance of our model in the limit when intermeeting times dominate contact times ($p=0$). When this is not the case, i.e., $p>0$, one may be tempted to use still the same model, which is certainly appealing for the simplicity of the expression of flooding time $F_0$ in that regime (see Section \ref{sec:Poisspoint}). Nevertheless, we now show numerically that, assuming that the regime is non-sparse ($p>0$), the error committed by approximating the flooding time in the sparse regime ($p=0$) may be tremendously high even for reasonable values of the stationary probability $p$. In Figure \ref{fig:approxerror}(a) we illustrate the ratio between the flooding time in the sparse regime, $F_0(1,N\!-\!1)$, and the flooding time in the non-sparse regime, $F(1,N\!-\!1)$, for different values of $p$ and $N$. As expected, the approximation error is negligible for $p$ small. Nevertheless, it becomes indeed unacceptable for $p>0.1$ even for networks with a relatively small number of nodes ($N\le 50$).

\subsection{Bounds tightness}

We now report some numerical results on the tightness of the lower and upper bounds $\underline{F}(1)$ and $\overline{F}(1)$, proposed in Sections \ref{sec:lowbsub} and \ref{sec:2ndupperFlood} respectively. Similarly to Fig. \ref{fig:approxerror}(a), in Fig. \ref{fig:approxerror}(b) we illustrate the ratio between the upper bound $\overline{F}(1)$ and the exact expression of the flooding time $F(1)$, for different values of the stationary probability $p$ and the number of terminals $N$. Fig. \ref{fig:approxerror}(c) shows the same analysis for the lower bound $\overline{F}(1)$. We notice that the upper bound $\overline{F}(1)$ is a good approximation of $F(1)$ for values sufficiently large of $p$, i.e. $p>0.3$. On the other hand, Fig. \ref{fig:approxerror}(c) confirms our analytic results in Sect. \ref{sec:lowboundpsmall}: the lower bound $\underline{F}(1)$ well approximates the flooding time $F(1)$ for small values of $p$ (also, see Lemma \ref{lem:Flowpsmall}).

\subsection{Comparison between the two upper bounds}

In this paper we provided the expression of two different upper bounds for the flooding time $F(1,N\!-\!1)$, namely $F_0(1,N\!-\!1)$ (Sect. \ref{sec:Poisspoint}) and $\overline{F}(1,N\!-\!1)$ (Sect. \ref{sec:2ndupperFlood}). The former is actually the exact expression of the flooding time $F(1,N\!-\!1)$ in the sparse regime. Fig. \ref{fig:upp_bound_compare} illustrates the comparison between these two bounds.  Numerical experiments showed that the bound $\overline{F}(1,N\!-\!1)$ is tighter than $F_0(1,N\!-\!1)$ under two different regimes, namely $i)$ for all $N$, if $p$ is sufficiently large, or $ii)$ for $N$ large enough, if $p$ is sufficiently small. In other words, if we fix $\lambda$ and $p$, then there exists $\widehat{N}$ such that $\overline{F}(1,N\!-\!1) < F_0(1,N\!-\!1), \ \forall\, N>\widehat{N}$. Hence, the case $\widehat{N}=0$ describes regime $ii)$.

\begin{figure}
\centering 
\includegraphics[width=.6\textwidth]{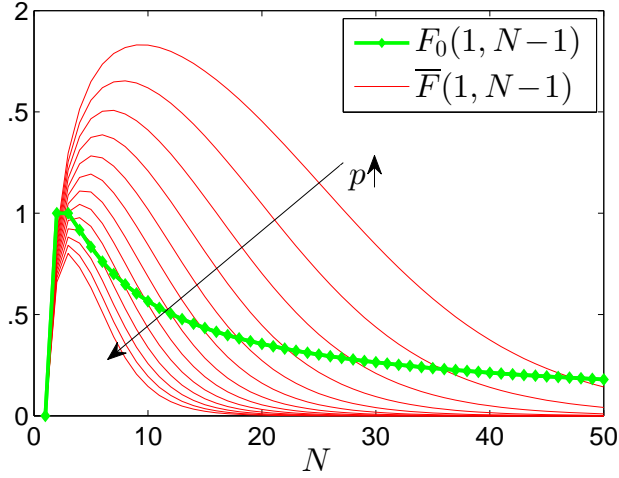}
\caption{Upper bounds $F_0$ and $\overline{F}$ compared for $\lambda=1$. $\overline{F}(1,N\!-\!1)<F_0(1,N\!-\!1)$ either $i)$ for all $N$, if $p>0.3$ or $ii)$ for $N$ sufficiently large for $p<0.3$.}
\label{fig:upp_bound_compare}
\end{figure} 


\begin{figure*}[t!] 
\centering 
\includegraphics[width=\textwidth]{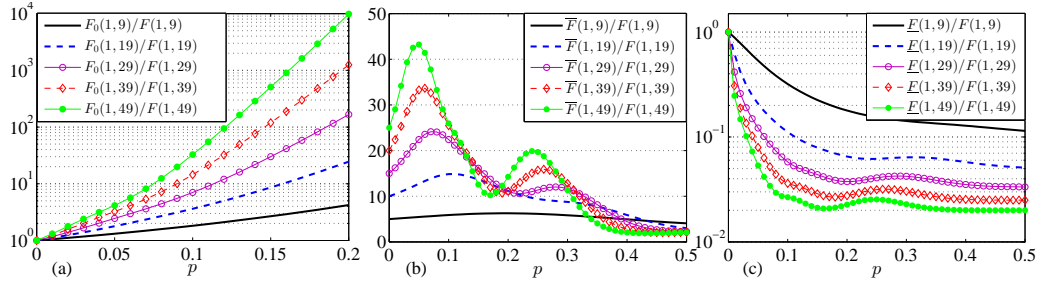}
\caption{In (a) we show the ratio $F_0(1)/F(1)$ for different values of $p$ and $N=10,\dots,50$: the approximation error becomes larger than $100\%$ for $p>0.1$ as soon as $N\geq 10$. At $p=0.1$, when $N \geq 50$, $F_0(1) > 10 F(1)$. Similarly, (b) and (c) illustrate the ratios $\overline{F}(1)/F(1)$ and $\underline{F}(1)/F(1)$, respectively. We see that $\underline{F}(1)$ well approximates $F(1)$ for small values of $p$, while $\overline{F}(1)$ approaches the exact value $F(1)$ for $p$ sufficiently large, i.e., $p>0.3$.}
\label{fig:approxerror}
\end{figure*}


\section{Conclusions}\label{sec:concl}


In this paper we focused on the flooding time for mobile ad hoc networks. In this context, the notion of diameter for static networks has little meaning. In our case, in fact, the network is disconnected almost surely at each point in time, so that the diameter is infinite. Yet, mobility of terminals overcomes instantaneous lack of connectivity and flooding time is still well defined. We applied a generalization of continuous time Markov-edge evolving graphs to this context. The advantage of such models is that they allow to encode the impact of finite contact durations into the flooding time calculation, where the event of possibly large connected components cannot be neglected. In such cases, the classic sparse models used in DTNs fail to capture such events thus providing conservative estimations. We provided the exact expression of the flooding time $F(1)$ in the non-sparse regime. Our continuous time model encompasses the sparse model as limit case. We computed a lower bound $\underline{F}(1)$ and an upper bound $\overline{F}(1)$, both having a reduced computational complexity. We studied analytically the behaviour of $F(1),\underline{F}(1),\overline{F}(1)$ for small values of the stationary probability $p$ and we investigated numerically the bounds tightness with respect to the exact value $F(1)$. We further showed the error committed by approximating $F(1)$ with the equivalent expression in the sparse regime, $F_0(1)$.
 
There are two key directions that we have not explored in this work. First, the heterogeneity of the link ON-OFF processes has been neglected for tractability's sake. An extension of the model in that direction would include the case when terminals have different mobility patterns, and this would permit tighter estimations. Also, in real mobility models, it is possible to measure a certain degree of correlation among intermeeting events. We plan to include such correlation effects into our model in our future research work.


\appendix
\section*{Proof of Theorem \ref{theo:floodTexact}}
\begin{proof}
For the sake of notation simplicity, we drop the dependence of the events $S_i^{(a)}$ and $W_i^{(c)}$ on $N$. We observe that the joint event $(W_i^{(c)},S_i^{(a)})$, with $a\le i$, is equivalent to the event $S_{i+c}^{(c)}$. In fact, when the set $\mathcal{I}$ of informed nodes transmit instantaneously the message to $\mathcal{C}$, then the number of informed nodes becomes $\mathcal{I}\cup\mathcal{C}$, but only the nodes in $\mathcal{C}$ can possibly copy the message to $\mathcal{N}\setminus(\mathcal{I}\cup\mathcal{C})$ instantaneously. By the total probability rule, we write $F(1) := \E [f(1)|S_1^{(1)}]$ as
\begin{align}
F(1) =& \, \sum_{c=0}^{N-1} \pr\left(W_1^{(c)}|S_1^{(1)}\right) \, \E\left[f(1)|W_1^{(c)},S_1^{(1)}\right] \notag\\
= & \, \sum_{c=0}^{N-2} \pr\left(W_1^{(c)}|S_1^{(1)}\right) \, \E\left[f(1)|S_{c+1}^{(c)}\right], \label{eq:F1n1}
\end{align}
\begin{align}
\E\left[f(1)|S_i^{(a)}\right] = & \, \sum_{c=0}^{N-i}\pr\left( W_i^{(c)}|S_i^{(a)} \right) \E\left[f(1)|W_i^{(c)},S_i^{(a)}\right] \notag\\
= & \, \sum_{c=0}^{N-i-1}\pr\left( W_i^{(c)}|S_i^{(a)} \right) \E\left[f(1)|S_{i+c}^{(c)}\right], \label{eq:Fhnh}
\end{align} 
with $a<i$. We remark that $\E[f(1)|S_N]=0$ because the transmission is instantaneous. Due to this, we could truncate the sums in (\ref{eq:F1n1}) and (\ref{eq:Fhnh}) up to $c=N-i-1$. Now, we find an explicit expression for (\ref{eq:F1n1}). When $i$ nodes are informed and all the $i(N-i)$ edges between $\mathcal{I}$ and $\mathcal{N}\setminus\mathcal{I}$ are OFF, an exponential time with mean $1/(\lambda i(N-i))$ needs to be waited before the flooding process resumes. At that instant, there are $i+1$ informed nodes but only one of them, i.e. the newly informed one, can possibly instantaneously copy the message to the uninformed nodes. Therefore, we can writ
\[
\E\left[f(1)|S_i^{(0)}\right] = \,  \frac{1}{\lambda i(N-i)} + \E\left[f(1)|S_{i+1}^{(1)}\right]. 
\]
Now we derive an explicit expression for the probability terms in (\ref{eq:F1n1}) and (\ref{eq:Fhnh}). The event that $c$ nodes (set $\mathcal{C}$) are connected to $i$ informed nodes (set $\mathcal{I}$), of which only $a\le i$ are possibly active (set $\mathcal{A}$), is the intersection of the events $i)$ for each $n\in\mathcal{C}$, there exists $n'\in\mathcal{A}$ such that the edge $(n,n')$ is ON, and $ii)$ all the edges between $\mathcal{A}$ and $\mathcal{N}\setminus(\mathcal{I}\cup\mathcal{C})$ are in state OFF. Thus, we can write 
\be \label{eq:remind}
\pr\left( W_i^{(c)}|S_i^{(a)} \right) = {N-i\choose c} \Big[1-(1-p)^{a}\Big]^c(1-p)^{a(N-i-c)}.
\ee
Let us define $F^{(a)}(i):=\E[f(1)|S_{i}^{(a)}]$. We can interpret $F(1)$ as $F^{(1)}(1)$. Then, the thesis easily follows by inspection.
\end{proof}

\section*{Proof of Proposition \ref{prop:lowboundF1}}

\begin{proof}
By the substitution in (\ref{eq:apprLB}), we modify (\ref{eq:F1n1}) as\\ $F'(1)= \, \sum_{c=0}^{N-2} \pr\left(W_1^{(c)}|S_1^{(1)}\right) \, \E\left[f(1)|S_{1+c}^{(1+c)}\right]$. \\
By defining $F'(i):=\E [f(1)|S_{i}^{(i)}]$, under assumption (\ref{eq:apprLB}) we modify (\ref{eq:Fhnh}) as $F'(i) = \, \sum_{c=0}^{N-i-1}\pr ( W_i^{(c)}|S_i^{(i)} ) F'(i+c).$ Let us define $\underline{F}(i):=F'(i)$, $1\le i\le N-1$. We compute $\pr (W_i^{(c)}|S_i^{(i)})$ as in (\ref{eq:remind}). The thesis follows by inspection.
\end{proof}

\section*{Proof of Lemma \ref{lem:Flowpsmall}}

\begin{proof}
Let us rewrite (\ref{eq:lowerT}) as $\mathbf{A} \, \underline{\mathbf{F}} = \, \mathbf{c}$, where $\underline{\mathbf{F}}= [\underline{F}(1) \ \dots \ \underline{F}(N-1)]^F$, $\mathbf{A}$ is an upper triangular $(N\!-\!1)$-by-$(N\!-\!1)$ matrix with diagonal elements $A_{i,i} = \, 1$ and 
\begin{align*}
A_{i,i+1} = & \, -(1-p)^{i(N-i)} -(N-i)\left[1-(1-p)^i\right] \times \\
& \ \ \ \times (1-p)^{i(N-i-1)}= \, -1+o(p) \\
A_{i,i+c} = & \, -{N-i\choose c} \left[1-(1-p)^i\right]^c (1-p)^{i(N-i-c)} = \, o(p),\\
c_i(p) = & \,  \frac{(1-p)^{i(N-i)}}{\lambda i(N-i)}= \, \frac{1-i(N-i)p}{\lambda i(N-i)} + o(p)
\end{align*} 
It is not difficult to show that $\underline{F}(N-i,i)=\sum_{i=N-i}^{N-1}(1-i(N-i)p)/(\lambda i(N-i))+o(p)$, and in particular 
\begin{align}
\underline{F}(1) = & \, \sum_{i=1}^{N-1} \frac{(1-p)^{i(N-i)}}{i(N-i)}-\lambda^{-1}(N-1)\,p+ o(p) \notag\\
= & \, F_0(1) - \lambda^{-1}(N-1)\,p+ o(p) \notag
\end{align}
\end{proof}

\section*{Proof of Proposition \ref{prop:Floodupp}}

\begin{proof}
By performing the substitution in (\ref{eq:apprUB}), we modify (\ref{eq:F1n1}) as $F''(1,N\!-\!1)=$\\
$\sum_{c=0}^{N-2} \pr\left(W_1^{(c)}(N)|S_1^{(1)}(N)\right) \, \E\left[f(1,N-2)|S_{c}^{(c)}(N-1)\right]$.\\
We define $F''(i,n\!-\!i)\!:=\!\E [f(1)|S_{i}^{(i)}(n)]$, $n\!<\!N$. Under assumption (\ref{eq:apprUB}) we modify (\ref{eq:Fhnh}) as $F''(i,n-i) = \, \sum_{c=0}^{N-i-1}\pr\left( W_i^{(c)}(n)|S_i^{(i)}(n) \right) F''(c,n\!-\!i\!-\!c).$ Let us define $\overline{F}(i,n-i):=F''(i,n-i)$. The thesis follows by inspection.
\end{proof}

\section*{Proof of Lemma \ref{lem:floodTpsmall}}

\begin{proof}
We consider the matricial expression $\mathbf{T}\mathbf{F}=\mathbf{d}$ (see Sect. \ref{sec:compcomplfloo}) and write the elements of $\mathbf{d},\mathbf{T}$ for $p$ small:
\begin{align*}
d_{\Psi(i,a)} = & \, \frac{1}{\lambda i(N-i)}- \frac{a}{\lambda i}p+o(p)\\ := & \,\alpha_{\Psi(i,a)} - \beta_{\Psi(i,a)}p + o(p)\\
T_{\Psi(i,a),\Psi(i+1,1)} = & \, -1+o(p) \\
T_{\Psi(i,a),\Psi(i+c,c)} = & \, o(p), \quad c=2,\dots,N-i-1,
\end{align*}
and $T_{i,i} = 1 $. We can write for $2\le i \le N-2$: 
\begin{align*}
&F^{(1)}(i) = \ \alpha_{\Psi(i,1)} - \beta_{\Psi(i,1)}\,p + F^{(1)}(i+1) + o(p), \\
&F^{(1)}(N-1) =  \ \alpha_{\Psi(N-1,1)} - \beta_{\Psi(N-1,1)}\,p + o(p)\\
&F(1) =  \ \alpha_{\Psi(1,1)} - \beta_{\Psi(1,1)}\,p + F^{(1)}(2) + o(p).
\end{align*}
Hence, we find that 
\begin{align*}
F(1,N\!-\!1) \!=\! \sum_{i=1}^{N-1} d_{\Psi(i,1)} = & \sum_{i=1}^{N-1} \frac{1}{\lambda i(N-i)} - \sum_{i=1}^{N-1} \frac{1}{\lambda i}\,p + o(p)\\
= & F_0(1) - \lambda^{-1} H_{N-1}\, p + o(p)\notag  
\end{align*}
\end{proof}

\section*{Proof of Proposition \ref{prop:complUB}}

\begin{proof}
Let us prove the thesis by induction on $N$. We know that $\overline{\mathcal{C}}_1= \overline{\mathcal{C}}_2 = 0$. Suppose that we have computed $\overline{F}(1,N-2)$ with $\overline{\mathcal{C}}_{N-1}$ operations. Then, we also have at our disposal $\overline{F}(i,N-i-1)$ for $1\le i \le N-3$. In order to compute $\overline{F}(1,N\!-\!1)$, we first need to derive $\overline{F}(i,N-1-i)$ for $i=2,\dots,N-2$ (see Fig. \ref{fig:Upp_compl}). The computation of $\overline{F}(i,N-1-i)$ requires $N-i-2$ operations. Finally, calculating $\overline{F}(1,N\!-\!1)$ involves $N-2$ operations. Hence, for $N\ge 3$,
\begin{align}
\overline{\mathcal{C}}_N = & \, \overline{\mathcal{C}}_{N-1} + (N\!-\!2) + (N\!-\!4)+(N\!-\!5)+\dots+1 \label{eq:mispiace} 
\end{align}
Therefore, $\overline{\mathcal{C}}_N = \sum_{i=1}^{N-4} i(N-3-i) + (N-1)(N-2)/2$, which equals $\mathcal{C}_N$. 

\end{proof}


\end{document}